\newcommand{\bbR}{\mathbb{R}_{\geq 0}}
\newcommand{\bbN}{\mathbb{N}}
\newcommand{\calA}{\mathcal{A}}
\newcommand{\calN}{\mathcal{N}}
\newcommand{\calS}{\mathcal{S}}
\numberwithin{equation}{section}
\begin{document}
	\title{Robustness against Agent Failure in Hedonic Games}
	\author{Ayumi Igarashi}
	\affiliation{Kyushu University, Fukuoka, Japan}
	\author{Kazunori Ota}
	\affiliation{Kyushu University, Fukuoka, Japan}
	\author{Yuko Sakurai}
	\affiliation{National Institute of Advanced Industrial Science and Technology, Tokyo, Japan}
	\author{Makoto Yokoo}
	\affiliation{Kyushu University, Fukuoka, Japan}
\begin{abstract}
We study how stability can be maintained even after any set of at most $k$ players
leave their groups, in the context of hedonic games. While stability properties
ensure an outcome to be robust against players' deviations, it has not been considered how an
unexpected change caused by a sudden deletion of players affects
stable outcomes. In this paper, we propose a novel criterion that reshapes stability
form robustness aspect. We observe that some stability properties can be no longer preserved even when a single agent is removed.
However, we obtain positive results by focusing on symmetric
friend-oriented hedonic games. We prove that we can efficiently decide
the existence of robust outcomes with respect to Nash stability under
deletion of any number of players or contractual individual
stability under deletion of a single player. We also show that symmetric additively separable games always admit an individual stable outcome
that is robust with respect to individual rationality.
	\vspace{-10pt}
\end{abstract}

\maketitle

\section{Introduction}	\label{sec:intro}
Coalition formation is everywhere in human activities. Companies group their employers into project teams. Countries form coalitions to promote international trade among them. Individuals interact with each other and form groups in order to achieve objectives they cannot seek for on their own. 

{\em Hedonic coalition formation games} (for short, hedonic games), introduced by \citet{Bogomolnaia2002} and \citet{Banerjee2001}, provide an elegant framework to formulate coalition formation. In these games, each player has preferences over the coalitions to which she or he belongs, and desirable outcomes often correspond to {\em stable} partitions. The basic intuition behind stable partitioning is that group structures need to be robust under certain changes {\em within} the system; that is, outcomes must be immune to players' coalitional or individual deviations to other coalitions. Stability can prevent internal conflicts among members of a coalition, or it can help us nurture the relationship among members of a project team.

In many real-world scenarios, however, groups may encounter unexpected changes and challenges, imposed from the {\em outside} of the system. For instance, a certain country can go bankrupt and be enforced to leave a political alliance. In this respect, a group structure that satisfies a standard stability requirement can become immediately unstable due to unexpected circumstances. A case in point is a political coalition of three countries with one intermediate country connecting two other countries who are enemies to each other: if the intermediate player happens to disappear from the coalition, one cannot maintain the stability of the whole system.

In this paper, we propose a novel criterion that redefines stability from robustness aspect. We define an outcome to be {\em robust} with respect to a certain stability requirement $\alpha$ if removing any set of at most $k$ players still preserves $\alpha$. Besides the preceding example of a political alliance, there are several applications of hedonic games, such as project team formation \citep{okimoto2015}, research team formation \citep{Alcalde2004}, and group activity selection \citep{Darmann2012}, in which unexpected players' non-participation may severely affect stability of the system. To the best of our knowledge, however, no attempt has been ever made to connect two important considerations, robustness and stability. Our goal is to make the first step filling this gap. 
\smallskip

\noindent
{\bf Our contribution}
We focus on friend-oriented games, introduced by \citet{Dimitrov2006}, where players' preferences are succinctly encoded via the binary friendship relations. While it is known that such games always guarantee the existence of stable outcomes, we observe that a simple example of one player connecting two enemies shows impossibility in maintaining most of the stability properties, such as {\em core stability}, {\em Nash stability}, {\em individual stability}, and {\em contractual individual stability}. Not surprisingly, this negative result holds even under a very small change of the system, i.e., only a single player can disappear. 

%%%%%%%%%%%%%%%%%%%%%%%%%%%%%%%%%%%%%%%%%%%% 
Given these non-existence results, we investigate the computational complexity of deciding the existence of a robust outcome in a symmetric friend-oriented game. Specifically, we show that we can efficiently decide the existence of an outcome that is robust with respect to Nash stability, irrespective of the number of players leaving the game. We then prove that any symmetric friend-oriented game admits a polynomial time algorithm that finds a robust outcome with respect to contractual individual stability in case of removing a single player. To this end, we obtain a non-trivial characterization of games whose corresponding robust outcomes are non-empty. Moreover, we complement this result by showing that the problem becomes NP-hard when $k =2$. We also show that the positive results do not extend to an intermediate stability property, individual stability: we prove that the associated problem for individual stability is NP-hard even when only a single player is allowed to leave. 

%%%%%%%%%%%%%%%%%%%%%%%%%%%%%%%%%%%%%%%%%%%%
Finally, we consider the question of whether a minimum stability requirement, {\em individual rationality}, can be maintained while ensuring that an outcome of a game itself satisfies stronger stability desiderata. It turns out that when players have symmetric additively separable preferences, an individually stable partition which is robust with respect to individual rationality always exist. 
Our complexity results are summarized in Table \ref{table}.
\begin{table}[t]
	\centering
	\begin{tabular}{ll}
	\toprule
		%CR-robustness & NP-hard (Th.~\ref{thm:NPh:IS})\\
		NS-robustness & poly time (Cor.~\ref{cor:NS}) \\
		IS-robustness & NP-complete ($k=1$) (Th.~\ref{thm:NPh:IS})\\
		CIS-robustness & poly time ($k =1$) (Th.~\ref{thm:CIS:poly}) \\
                                  & NP-complete ($k = 2$) (Th.~\ref{thm:NPh:CIS}) \\
		IS \& IR-robustness & exists and polytime (Th.~\ref{thm:sF:IS-IRrobust}) \\
	\bottomrule
	\end{tabular}
	\caption{Overview of our complexity results in a symmetric friend-oriented game, where $k$ is the maximum number of players who can leave the entire game.}
	\vspace{-5pt}
	\label{table}
\end{table}
\smallskip

\noindent
{\bf Related work} Several papers considered robustness against agent failures in the context of cooperative games. \citet{bachrach2011} proposed the reliability extension of cooperative games where each agent has an independent failure probability. This probabilistic model has been also applied to subclasses of cooperative games, such as totally-balanced games~\citep{bachrach2012}, weighted voting games~\citep{bachrach2013}, and cooperative max-games~\citep{bachrach2014}. 
\citet{okimoto2015} introduced the concept of {\em $k$-robustness} for team formation problems; under their definition, each team still needs to accomplish their task even after $k$ agents fail. 

% uncertainty of coalition values
Crudely, two different approaches deal with cooperative games with uncertainty. One is the Bayesian approach assuming
a known prior over agent capabilities \citep{chalkiadakis2004coalition, chalkiadakis2007core,
myerson2007core}. Another approach, initiated by \citet{balcan2015learning}, is to apply the PAC (probably approximately correct) learning model to cooperative games.
Specifically, \citet{balcan2015learning} studied the learnability of probable stable payoffs, given random samples of coalitions, which has been extended to hedonic games 

Our work is also related to the rich body of the literature on the study of hedonic games. \citet{Bogomolnaia2002} and \citet{Banerjee2001} were the first to model hedonic coalition formation games in which players' preferences solely depend on the members of each coalition. \citet{Bogomolnaia2002} considered various possibilities of players' deviations, which gives rise to different concepts of stability outcomes. Several important subclasses of hedonic games have been later proposed, including additively separable hedonic games (ASHGs) \citep{Bogomolnaia2002}, friend and enemy oriented games \citep{Dimitrov2006}, fractional hedonic games \citep{aziz2014fractional,AzizBBHOP17}, to name a few.

Several works explored the relation between stability and the networks capturing agents' preferences, in which the nodes of a graph represent players and edges correspond to the degree of preference. 
\citet{bilo1} analyzed the ratio between the social welfare of a Nash stable outcome and social optimum in fractional hedonic games, for different topologies such as bipartite graphs and trees.
In a more general setting, \citet{peters2016graphical} introduced graphical hedonic games and obtained a number of computational complexity results of stability outcomes for the case when the graph describing agents' preferences has a bounded treewidth. \citet{igarashi2016hedonicgraph} used a different approach and considered hedonic games where players are located on a graph and coalitions are only allowed to form if they are connected in this graph; they proved both existence and complexity results of some stability concepts on acyclic graphs.

Our definition of robustness is arguably the most stringent requirement one could aim for, as it requires an outcome to be immune to {\em any} possibility of deterministic agent failure. However, the definition resembles some graph connectivity concepts, such as the $k$-vertex-connectivity, capturing the robustness of a given network (see, e.g., \citet{Schrijver2003}). % kojima2011(added the papers AAMAS reviewer pointed out)
We also note that the notion of robustness in a stable matching is fundamentally different from ours. For instance, \citet{kojima2011} considers instability that results from agents' manipulation; a mechanism is considered to be robustly stable if it is strategy-proof and immune to a blocking pair before and after an agent misrepresents her preference. This does not take into account the possibility of making the system unstable due to agent failures. 

\section{Preliminaries}
For a natural number $s \in \bbN$, we write $[s]=\{1,2,\ldots,s\}$. A hedonic game is defined as a pair $(N,(\succeq_{i})_{i \in N})$ where $N=[n]$ is a finite set of {\em players} and each $\succeq_{i}$ is a preference over the subsets of $N$ (also referred to as {\em coalitions}); specifically for every $i \in N$, we let $\calN_i$ denote the collection of all coalitions containing $i$; each $\succeq_{i}$ describes a complete and transitive preference over the sets in $\calN_i$. Let $\succ_{i}$ denote the strict preference derived from $\succeq_{i}$, i.e., $S \succ_i T$ if $S \succeq_i T$, but $T \not \succeq_{i} S$. For $i \in N$ and $S,T \in \calN_i$, we say that player $i$ {\em strictly prefers} a coalition $S$ to another coalition $T$ if $S \succ_i T$; player $i$ {\em weakly prefers} $S$ to $T$ if $S \succeq_i T$. 
We call a coalition $S \subseteq N$ {\it individually rational} if every player $i \in S$ weakly prefers $S$ to $\{i\}$. 

%additive
%An {\em additively separable game} is a subclass of hedonic games, capturing situations where each player has a specific value for every other player, and ranks coalitions according to the total additive value of their members~\citep{Bogomolnaia2002}. 
A preference profile $(\succeq_{i})_{i \in N}$ is said to be {\it additively separable} if there exists a {\em weight function} $w:N \times N 
\rightarrow \bbR$ such that for each $i \in N$ and each $S,T\in\calN_i$ 
we have $S \succeq_{i} T$ if and only if $\sum_{j \in S}w(i,j) \ge \sum_{j \in T}w(i,j)$~\citep{Bogomolnaia2002}; we will assume that
$w(i,i)=0$ for each $i \in N$. An additively separable preference is said to be {\it symmetric} if the weight function $w:N \times N \rightarrow \bbR$ is symmetric, i.e., $w(i,j)=w(j,i)$ for all $i,j \in N$. We use the notation $(N,w)$ to denote an additively separable game with weight function $w:N \times N \rightarrow \bbR$. 
For additively separable games, each player can consider every other player to be either a friend, a neutral player, or an enemy; specifically, for each pair of distinct players $i,j \in N$, we say that $j$ is a {\em friend} of $i$ if $w(i,j) > 0$, and $j$ is an {\em enemy} of $i$ if $w(i,j) < 0$.

\citet{Dimitrov2006} introduced a subclass of additively separable preferences, which they called {\it friend-oriented preferences}. Under friend-oriented preferences, each player has strong favour towards her friends: $w(i,j)\in \{n,-1\}$ for each $i,j \in N$ with $i \neq j$. For a symmetric additively separable game $(N,w)$, let $G_w$ denote the {\em friendship graph} where the set of vertices is given by the set of players and two players $i,j$ are adjacent if and only if they are friends; each coalition $S$ is said to have {\em minimum degree} $t$ if each player in $S$ has at least $t$ other friends in $S$. 

An {\em outcome} of a hedonic game is a partition of players into disjoint coalitions. Given a partition $\pi$ of $N$ and 
a player $i \in N$, let $\pi(i)$ denote the unique coalition in $\pi$ that contains $i$. Much of the existing literature is concerned with outcomes that satisfy certain stability requirements. 
A minimum stability property we require is {\em individual rationality}. 
A partition $\pi$ of $N$ is said to be {\it individually rational} (IR) if each player prefers their coalition to staying alone, i.e., all coalitions in $\pi$ are individually rational.
If we extend this to a group deviation, we obtain the definition of the {\em core}.
%blocking
Specifically, a coalition $S \subseteq N$ {\it strongly blocks} a partition $\pi$ of $N$ if every player $i \in S$ strictly prefers $S$ to her own coalition $\pi(i)$.
%CR
A partition $\pi$ of $N$ is said to be {\it core stable} (CR) if no coalition $S \subseteq N$ 
strongly blocks $\pi$.
%NS,IS
We also consider deviations based on individual movements. Specifically, consider a player $i \in N$ and a pair of coalitions 
$S\not\in\calN_i$, $T\in\calN_i$. 
A player $j \in S$ {\it accepts} a deviation of $i$ to $S$ 
if $j$ weakly prefers $S\cup \{i\}$ to $S$; a player $j \in T$ {\it accepts} a deviation of $i$ to $S$ 
if $j$ weakly prefers $T\setminus \{i\}$ to $T$.  
A deviation of $i$ from $T$ to $S$ is an {\it NS-deviation} if $i$ strictly prefers $S \cup \{i\}$ to $T$, an {\it IS-deviation} if it is an NS-deviation and all players in $S$ accept it, and a {\em CIS-deviation} if it is an IS-deviation and all players in $T$ accept it.
A partition $\pi$ is called {\em Nash stable} (NS) (respectively, {\em individually stable} (IS) and {\em contractually individually stable} (CIS)) if no player $i\in N$ has an NS-deviation
(respectively, an IS-deviation and a CIS-deviation) from $\pi(i)$ to another coalition
$S\in \pi$ or to $\emptyset$. 

Trivially, Nash stability implies individual stability, which also implies contractually individual stability. Usually, core stability does not imply the stability based on individual deviations. However, we note that for a symmetric friend-oriented game, core stability implies individual stability. 

\begin{lemma}\label{lem:relation}
For a symmetric friend-oriented game $(N,w)$, if a partition $\pi$ is core stable, then $\pi$ is individually stable. 
\end{lemma}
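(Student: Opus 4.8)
The plan is to argue by contraposition: assuming $\pi$ is not individually stable, I will exhibit a coalition that strongly blocks $\pi$, contradicting core stability. So suppose some player $i$ has an IS-deviation from her own coalition $T = \pi(i)$ to a coalition $S$ that is either a member of $\pi$ or the empty set.

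First I would dispose of the trivial case $S = \emptyset$. Here an IS-deviation is vacuously accepted (there are no members of $S$ to consult), so the only content is the NS-condition that $i$ strictly prefers $\{i\}$ to $\pi(i)$. But then the singleton $\{i\}$ already strongly blocks $\pi$, and we are done.

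The main case is $S \in \pi$ nonempty, and the natural candidate for a blocking coalition is $S \cup \{i\}$. For $i$ herself, the NS-deviation condition directly gives $S \cup \{i\} \succ_i T = \pi(i)$. The crux is to show that every $j \in S$ also \emph{strictly} prefers $S \cup \{i\}$ to $\pi(j) = S$ (note $\pi(j) = S$ because $j \in S$ and $S \in \pi$). The acceptance clause of an IS-deviation only supplies the \emph{weak} preference $S \cup \{i\} \succeq_j S$, so the whole argument hinges on upgrading weak to strict; I expect this to be the only substantive step.

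This is precisely where the friend-oriented structure is indispensable. In a symmetric friend-oriented game the marginal effect of adding $i$ to $S$ on $j$'s utility is exactly $w(j,i) \in \{n, -1\}$, which is never zero. Hence $S \cup \{i\} \succeq_j S$ rules out $w(j,i) = -1$ (adding an enemy would \emph{strictly} decrease $j$'s value), forcing $w(j,i) = n > 0$ and therefore $S \cup \{i\} \succ_j S$. Applying this to every $j \in S$, together with the strict preference of $i$, shows that all members of $S \cup \{i\}$ strictly prefer $S \cup \{i\}$ to their current coalition in $\pi$; thus $S \cup \{i\}$ strongly blocks $\pi$, contradicting core stability and completing the proof.
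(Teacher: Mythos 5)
Your proof is correct and follows essentially the same route as the paper's: both arguments take the IS-deviation of $i$ to $S$, use the fact that friend-oriented weights are never zero to upgrade each member's weak acceptance $S \cup \{i\} \succeq_j S$ to the strict preference $w(j,i) = n > 0$, and conclude that $S \cup \{i\}$ strongly blocks $\pi$, contradicting core stability. Your write-up is merely more explicit than the paper's (which compresses the weak-to-strict step into the phrase ``$i$ and every player in $S$ are friends to each other'') and additionally covers the deviation-to-$\emptyset$ case, which the paper silently omits.
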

\begin{proof}
Suppose towards a contradiction that $\pi$ is core stable but does not satisfy individual stability. Then, there is a player $i$ who has an IS-deviation to some coalition $S \in \pi$. This means that $i$ strictly prefers $S \cup \{i\}$ to $\pi(i)$, and $i$ and every player in $S$ are friends to each other. Thus, the players $S \cup \{i\}$ strongly block $\pi$, a contradiction. 
\end{proof}

We also note that contractual individual stability does not normally imply individual rationality as a player's deviation under the stability concept needs to be approved by the members of her coalition. With symmetric friend-oriented preferences, nevertheless, the implication holds. 

\begin{lemma}\label{lem:relation}
For a symmetric friend-oriented game $(N,w)$, if a partition $\pi$ is contractually individually stable, then $\pi$ is individually rational. 
\end{lemma}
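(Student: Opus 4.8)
The plan is to argue by contradiction. Suppose $\pi$ is contractually individually stable but fails individual rationality. Then some player $i$ strictly prefers staying alone to remaining in her coalition, i.e.\ $\{i\} \succ_i \pi(i)$. Since the game is additively separable, I would rewrite this in terms of utilities: the value $\sum_{j \in \pi(i)} w(i,j)$ that $i$ derives from $\pi(i)$ must be strictly negative, whereas the value of the singleton $\{i\}$ is $w(i,i) = 0$.

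The crucial quantitative step exploits the friend-oriented magnitudes $w(i,j) \in \{n,-1\}$. Let $f$ be the number of friends and $e$ the number of enemies of $i$ inside $\pi(i)$, so that $i$'s utility is $f n - e$. First I would observe that $i$ has no friends in $\pi(i)$: if $f \geq 1$, then $f n \geq n > n-1 \geq e$ (as there are at most $n-1$ other players), which would make the utility nonnegative, contradicting $f n - e < 0$. Hence $f = 0$, every member of $\pi(i) \setminus \{i\}$ is an enemy of $i$, and by symmetry of $w$ the player $i$ is in turn an enemy of each such member.

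I would then exhibit the forbidden deviation in which $i$ leaves $\pi(i)$ and becomes a singleton, i.e.\ a deviation to $\emptyset$. This is an NS-deviation because $i$ strictly prefers $\{i\}$ to $\pi(i)$ by assumption; it is an IS-deviation because the target coalition is empty, so the acceptance condition on the receiving side is vacuous; and it is a CIS-deviation because every remaining player $j \in \pi(i) \setminus \{i\}$ loses the enemy $i$, gaining utility $-w(j,i) = 1 > 0$, and therefore (strictly, hence weakly) prefers $\pi(i) \setminus \{i\}$ to $\pi(i)$ and accepts the departure. This contradicts the contractual individual stability of $\pi$.

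I expect the only real content to lie in the counting step that forces $f = 0$; once that is established, the symmetry of the friendship relation makes the acceptance of the departing player's deviation immediate, and the rest is bookkeeping against the definition of a CIS-deviation to $\emptyset$.
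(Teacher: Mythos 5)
Your proof is correct and takes essentially the same route as the paper's: assume individual rationality fails, conclude that every member of $\pi(i)\setminus\{i\}$ must be an enemy of $i$ (the paper leaves the friend-oriented counting argument implicit), and then exhibit the CIS-deviation of $i$ to $\emptyset$, which the abandoned players accept by symmetry of $w$. You merely make explicit the quantitative step $fn - e > 0$ when $f \geq 1$ and the acceptance bookkeeping that the paper's shorter proof glosses over.
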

\begin{proof}
Suppose towards a contradiction that $\pi$ is contractually individually stable but does not satisfy individual rationality. Then, there is a player $i$ who strictly prefers $\{i\}$ to his own coalition $\pi(i)$. This means that $\pi(i)$ contains at least two players and every other player $j \in \pi(i) \setminus \{i\}$ is an enemy of $i$. Thus, $i$ has a CIS-deviation to the emptyset, a contradiction. 
\end{proof}

\section{Agent failure in hedonic games}
Earlier we defined the robustness informally: A sudden deletion of players upon an outcome should preserve the property it has achieved before. We are now in a position to make the definition more formal. For each $S \subseteq N$ and $i \in N$, we denote by $\succeq_i|_{S}$ the preference relation restricted to $\calN_i \cap 2^S$. 

\begin{definition}
Given $\alpha \in \{\mbox{CR,NS,IS,CIS, IR}\}$ and a natural number $k>0$, a partition $\pi$ is said to be $\alpha$-robust under deletion of at most $k$ players if $\pi$ satisfies the property $\alpha$, and for any $S \subseteq N$ with $|S| \le k$, the partition $\pi_{-S} :=\{\, S'\setminus S \mid S' \in \pi\,\}$ still satisfies the property $\alpha$ in the subgame $(N\setminus S,(\succeq_{i}|_{N\setminus S})_{i \in N \setminus S})$. When $k$ is clear from the context, we will simply call such partition $\alpha$-robust.
\end{definition}

By definition, if an outcome is $\alpha$-robust under deletion of $k+1$ players, then it is $\alpha$-robust under deletion of any $\ell \leq k$ players. 
Also, fixing parameter $k$, the relations between the above robustness concepts are the same as those among the corresponding stability concepts. Namely, we have the following containment relation among the classes of outcomes: $\mbox{NS-robust} \subseteq \mbox{IS-robust}  \subseteq \mbox{CIS-robust}  \cap \mbox{IR-robust}$,  and $\mbox{CR-robust} \subseteq \mbox{IR-robust}$. Also, by Lemma \ref{lem:relation}, CR-robustness implies IS-robustness and CIS-robustness implies IR-robustness for a symmetric friend-oriented game. 

It is known that a stable outcome of a symmetric friend-oriented game is guaranteed to exist and can be found in polynomial time: a partition that divides the players into the connected components satisfies the preceding stability requirements. However, the example below illustrates that even when players have symmetric friend-oriented preferences, an $\alpha$-robust partition under deletion of a single player may not exist for any $\alpha \in \{CR,NS,IS, CIS\}$. 

\begin{example}\label{ex:friend:nonexistence}
\upshape
Consider a symmetric friend-oriented game $(N,w)$ with three players $a$, $b$, and $c$. The friendship graph $G_w$ forms a star with the center being $b$ (Figure \ref{fig:friend}). 
\begin{figure}[htb]
\centering
\begin{tikzpicture}[scale=1, transform shape]
	\node[draw, circle](1) at (-1.5,0) {$a$};
	\node[draw, circle](2) at (0,0) {$b$};
	\node[draw, circle](3) at (1.5,0) {$c$};
	
	\draw[-, >=latex,thick] (1)--(2) (2)--(3);	
\end{tikzpicture}
\caption{Non-existence of a CR-robust partition for symmetric friend-oriented games.
\label{fig:friend}
}
\end{figure}
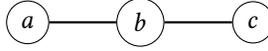

Suppose towards a contradiction that $\pi$ is CIS-robust under deletion of a single player. 
\begin{itemize}
\item First, suppose $\pi = \{\{a,b,c\}\}$. Then, without $b$, the coalition is not individually rational, a contradiction. 
\item Second, suppose $\pi = \{\{a\},\{b,c\}\}$ or $\pi = \{\{a,b\},\{c\}\}$. Then, if the player who belongs to the same coalition as $b$ disappears, player $b$ would have a CIS-deviation to the other coalition, a contradiction. 
\item Third, if $\pi = \{\{a\},\{b\},\{c\}\}$, it would not satisfy contractually individual stability, a contradiction. 
\item Finally, if $\pi=\{\{a,c\},\{b\}\}$, then $\pi$ would not be individually rational, a contradiction. 
\end{itemize}
We have exhausted all possible cases and obtained a contradiction. Hence the game admits no CIS-robust partition. This means that the game does not have an $\alpha$-robust outcome for any $\alpha \in \{CR,NS,IS,CIS\}$.\qed
\end{example}

\section{NS-robustness}
We saw that a symmetric friend-oriented game may not admit an NS-robust outcome. In this section, we show that deciding the existence of an NS-robust outcome remains easy for a symmetric friend-oriented game. We warm up by observing that in order to preserve individual rationality, each coalition must be a clique or have minimum degree at least $k+1$. 
\begin{lemma}\label{lem:IR}
For any symmetric friend-oriented game, $k>0$, and any IR-robust partition $\pi$, each $S \in \pi$ is either a clique or has minimum degree at least $k+1$. 
\end{lemma}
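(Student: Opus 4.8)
The plan is to reduce individual rationality in a friend-oriented game to a combinatorial condition on the friendship graph $G_w$, and then to use a single well-chosen deletion to force every non-clique coalition to be locally dense. Throughout I would track, for a player $i$ inside a coalition $T$, the number $f_i$ of its friends in $T$ and the number $e_i$ of its enemies in $T$; note that in a friend-oriented game every other member of $T$ is exactly one of these two (there are no neutral players), so $|T| = 1 + f_i + e_i$, and by symmetry $G_w$ is undirected so that an enemy pair is mutual.

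First I would record the basic observation that a single friend dominates all enemies: since $w(i,j)\in\{n,-1\}$ and $e_i \le n-1$, for a member $i$ of a coalition $T$ with $|T|\ge 2$ the utility $\sum_{j\in T} w(i,j) = n f_i - e_i$ is nonnegative iff $f_i\ge 1$ (when $f_i=0$ it equals $-e_i<0$). Hence such a $T$ is individually rational precisely when every member has at least one friend in $T$.

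The heart of the argument is the claim that, in an IR-robust partition, every $i\in S$ has $f_i \ge k+1$ or $e_i = 0$ (friends and enemies now counted within $S$). I would prove the contrapositive: if $f_i \le k$ and $e_i \ge 1$, then deleting exactly the $f_i \le k$ friends of $i$ is a deletion of at most $k$ players after which $i$ survives in a coalition consisting of itself together with its $\ge 1$ remaining enemies; by the first observation $i$ then strictly prefers to be alone, contradicting IR-robustness. Given this claim I would finish as follows: if $S$ is not a clique it contains an enemy pair $u,v$, so $e_u\ge 1$ forces $f_u\ge k+1$; counting $u$, its $k+1$ friends, and $v$ as distinct vertices gives $|S|\ge k+3$. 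Then for an arbitrary $i\in S$, either $e_i\ge 1$ and the claim yields $f_i\ge k+1$ directly, or $e_i=0$ and $i$ is adjacent to all of $S$ so that $f_i=|S|-1\ge k+2$. In both cases $f_i\ge k+1$, i.e. $S$ has minimum degree at least $k+1$.

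The step I expect to be the main obstacle is handling players with no enemies in $S$: the deletion argument is silent about them, so their degree cannot be bounded directly from robustness. The resolution is that the mere existence of one enemy pair already forces $S$ to be large through the high-degree player $u$, which in turn makes every enemy-free player automatically adjacent to more than $k$ others. Keeping the ``friends versus enemies inside $S$'' count cleanly separated from the ``deleted versus surviving'' count is where I would be most careful.
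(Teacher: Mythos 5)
Your proof is correct and rests on the same key mechanism as the paper's: a player with at least one enemy and at most $k$ friends inside its coalition contradicts IR-robustness, because deleting all of its friends (at most $k$ players) strands it with enemies only, and friend-oriented utilities make such a coalition individually irrational. The paper packages this as a two-step contradiction (first forcing $|S| \le k+1$ via the low-degree player, then stranding some player that has an enemy), whereas you isolate it as a dichotomy claim and finish directly by counting (the bound $|S| \ge k+3$ takes care of enemy-free members); this is an organizational, not conceptual, difference.
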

\begin{proof}
Let $\pi$ be an IR-robust partition. Suppose towards a contradiction that there is a coalition $S \in \pi$ such that $S$ does not form a clique and there is a player $i \in S$ who has at most $k$ friends in $S$. This means that by IR-robustness, $S$ has size at most $k+1$; otherwise, removing all $i$'s friends in $S$ would violate individual rationality for $i$. Now since $S$ is not a clique, there is a player $j$ who has an enemy in $S$. Observe that $j$ has at most $k-1$ friends in $S$, since $S$ has size at most $k+1$ and at least one of the players is an enemy of $j$. Hence if all the friends of $j$ in $S$ disappear, this would cause the deviation of $j$ to staying alone, contradicting IR-robustness.
\end{proof}

Observe that if there is a coalition of size at most $k+1$ and some player has a friend in other coalitions, the player would have an NS-deviation to the other coalition after removal of $k$ players. Hence, any NS-robust outcome cannot contain such coalition, which leads to the following characterization of the classes of friend-oriented games whose NS-robust outcomes are non-empty. 

\begin{theorem}\label{thm:NS:friend}
The following conditions are equivalent for any symmetric friend-oriented game $(N,w)$ and any natural number $k>0$:
\begin{enumerate}
\item There exists an NS-robust partition. 
\item Each connected component of $G_w$ is either a clique or has minimum degree at least $k+1$. 
\end{enumerate}
\end{theorem}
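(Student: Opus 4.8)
The plan is to prove the two implications separately. The direction $(2) \Rightarrow (1)$ I would establish by an explicit construction: take the canonical partition $\pi^\ast$ that places each connected component of $G_w$ into its own coalition, and show it is NS-robust. After removing any set $S$ with $|S| \le k$ we have $\pi^\ast_{-S} = \{\, C \setminus S \mid C \text{ a component}\}$, and since every friend of a player $i$ lies in $i$'s own component, switching to a different coalition only adds enemies and no friends and is therefore never improving. The only candidate NS-deviation is thus a player who strictly prefers to stand alone, which under friend-oriented preferences happens exactly when that player retains at least one enemy but no friend in its coalition. I would rule this out using the hypothesis: if $i$'s component $C$ is a clique, then $C \setminus S$ is a clique and $i$ has no enemy there; if $C$ has minimum degree at least $k+1$, then deleting at most $k$ players leaves $i$ with at least one surviving friend. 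Either way $i$ has no NS-deviation, so $\pi^\ast_{-S}$ is Nash stable for every admissible $S$.

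For $(1) \Rightarrow (2)$ I would argue the contrapositive: assuming some component $C$ is neither a clique nor of minimum degree at least $k+1$, I show no partition can be NS-robust. Choose a vertex $v \in C$ with at most $k$ friends, all of which lie in $C$. Since an NS-robust partition is in particular IR-robust, Lemma~\ref{lem:IR} forces $\pi(v)$ to be a clique or to have minimum degree at least $k+1$; the latter would give $v$ at least $k+1$ friends, which is impossible, so $\pi(v)$ must be a clique. As a clique through $v$, the coalition $\pi(v)$ consists of $v$ together with some of its friends, whence $\pi(v) \subseteq C$ and $|\pi(v)| \le k+1$. Because $C$ is connected and, being a non-clique, properly contains the clique $\pi(v)$, there is a friendship edge from some $u \in \pi(v)$ to a vertex outside $\pi(v)$. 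Thus $\pi(v)$ is a coalition of size at most $k+1$ containing a player with a friend in another coalition, which by the observation stated just before the theorem contradicts NS-robustness.

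The step I expect to be the main obstacle is the $(1) \Rightarrow (2)$ direction, and specifically locating the right coalition to which the crossing-edge observation can be applied. The difficulty is that NS-robustness alone does not force every coalition to be friendship-closed -- a plain Nash stable partition may well split a component across several coalitions -- so one cannot simply invoke the observation on an arbitrary coalition. The trick is to start from a low-degree vertex of the bad component: Lemma~\ref{lem:IR} pins its coalition down to be a clique, the degree bound forces that clique to be small, and connectivity then guarantees the outgoing friendship edge, producing exactly the configuration the observation rules out.
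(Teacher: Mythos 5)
Your proof is correct and follows essentially the same route as the paper's: the $(2)\Rightarrow(1)$ direction uses the same partition into connected components with the same case analysis, and the $(1)\Rightarrow(2)$ direction mirrors the paper exactly, invoking Lemma~\ref{lem:IR} to pin down the low-degree vertex's coalition as a small clique and then using connectivity to produce the outgoing edge that the pre-theorem observation rules out. No gaps; the paper's own argument just phrases the second direction as a direct contradiction rather than a contrapositive.
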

\begin{proof}
Suppose towards a contradiction that there exists an NS-robust outcome $\pi$ but for some connected component $G'=(V',E')$ of $G_w$, $G'$ is not a clique and there is a player $i$ with at most $k$ friends. Let $S=\pi(i)$. By Lemma \ref{lem:IR}, $S$ is a clique of size at most $k+1$. 
Now since $G'$ is not a clique, $V' \setminus S$ is non-empty. By connectivity of $G'$, there is a player $j \in S$ having a neighbor $i' \in V' \setminus S$. But this implies that when all the players in $S$ except for $j$ disappear, player $j$ would have an incentive to deviate to the coalition of $i'$, a contradiction.

Conversely suppose that each connected component of $G_w$ is a clique, or has minimum degree at least $k+1$. Let $\pi$ be a partition that divides the players into the connected components of $G_w$. Clearly, no player has an incentive to deviate to another coalition at $\pi$. Also, removing at most $k$ players does not affect Nash stability. Indeed, after removal of at most $k$ players, each player $i \in N$ has at least one friend in his coalition of the resulting partition and has no friend in the other coalitions; or $i$ forms a singleton and has no friend in the other coalitions.
Hence, $\pi$ is NS-robust.
\end{proof}

\begin{corollary}\label{cor:NS}
For a symmetric friend-oriented game $(N,w)$, deciding the existence of NS-robust outcomes can be done in polynomial time. 
\end{corollary}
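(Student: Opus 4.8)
The plan is to invoke Theorem~\ref{thm:NS:friend}, which already reduces the existence question to checking a purely structural condition on the friendship graph $G_w$. The theorem states that an NS-robust partition exists if and only if every connected component of $G_w$ is either a clique or has minimum degree at least $k+1$. Hence deciding existence is nothing more than testing this condition, and my task is only to confirm that the test runs in polynomial time. First I would construct $G_w$ from the weight function $w$ in time $O(n^2)$ by placing an edge between $i$ and $j$ precisely when $w(i,j)>0$. Next I would compute the connected components of $G_w$ by a standard graph traversal (BFS or DFS), which runs in time linear in the size of $G_w$ and hence polynomial in $n$.

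Then, for each connected component $G'=(V',E')$, I would test the disjunction appearing in condition~(2) of Theorem~\ref{thm:NS:friend}: whether $G'$ is a clique, or whether it has minimum degree at least $k+1$. Testing whether $G'$ is a clique amounts to checking $|E'| = \binom{|V'|}{2}$, equivalently verifying adjacency of every pair of vertices in $V'$, at cost $O(|V'|^2)$. Computing the minimum degree within $G'$ is done by counting, for each vertex, the number of its neighbours inside $V'$ and taking the minimum, again at cost $O(|V'|^2)$. I would declare the instance a yes-instance precisely when every component passes at least one of the two tests, and a no-instance otherwise; correctness is immediate from the equivalence established in Theorem~\ref{thm:NS:friend}.

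Since the components are vertex-disjoint, the combined cost of all per-component checks is bounded by $O(n^2)$, so the entire procedure runs in polynomial time. The substantive work has already been carried out in Theorem~\ref{thm:NS:friend}; the only point requiring any care, and the closest thing to an obstacle, is to confirm that each of the two alternative conditions is individually polynomial-time checkable and that summing over the at most $n$ components preserves the polynomial bound. Neither verification poses a genuine difficulty, so I do not expect any single step to be hard.
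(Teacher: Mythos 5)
Your proposal is correct and follows essentially the same route as the paper: both reduce the problem to verifying condition~(2) of Theorem~\ref{thm:NS:friend} and check it component by component via a standard graph traversal, the only (immaterial) difference being that the paper states the tighter $O(n+m)$ bound by noting a component is a clique exactly when its minimum degree equals its size minus one, whereas you use an $O(n^2)$ pairwise adjacency check.
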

\begin{proof}
The condition $(2)$ of Theorem \ref{thm:NS:friend} can be easily verified by checking the size and minimum degree of each connected component, which can be done in $O(n+m)$ by depth-first search, where $m$ is the number of edges in $G_w$.  
\end{proof}

\section{CIS-robustness and IS-robustness}
We now turn our attention to a weaker stability concept, {\em contractually individual stability}. Usually, such stability requirement is not difficult to achieve: \citet{Gairing2010} observed that a CIS partition is guaranteed to exist for any symmetric additively separable game and can be efficiently computed. As we have seen before, the presence of a star with two leaves complicates the existence of CIS-robust outcomes. In what follows, we will show that by decomposing the friendship graph appropriately, one can determine the existence of CIS-robust outcomes under deletion of a single player. We start by showing that a leaf player and its unique neighbor playing a role of {\em pseudo-center} form a pair in a CIS-robust outcome. For a graph $G=(V,E)$ and a subset $X \subseteq V$, we denote by $G \setminus X$ the subgraph of $G$ induced by $V \setminus X$. We say that a vertex $j$ is a {\em pseudo-center} in a graph $G$ if at most one neighbor of $j$ is a non-leaf vertex. 

\begin{lemma}\label{lem2:IR}
For a symmetric friend-oriented game $(N,w)$ and $k=1$, let $\pi$ be an arbitrary CIS-robust partition. If $j$ is the unique friend of $i$, and $j$ is a pseudo-center in $G_w$, then $\pi(i)=\{i,j\}$. 
\end{lemma}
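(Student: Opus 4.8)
The plan is to leverage the clique-or-high-degree structure forced by IR-robustness and then rule out the one bad configuration by deleting a single well-chosen player. Since CIS-robustness implies IR-robustness for symmetric friend-oriented games, Lemma~\ref{lem:IR} (with $k=1$) tells us that every coalition of $\pi$ is either a clique or has minimum degree at least $2$. The first easy observation is that $\pi(i)$ must be a clique: the player $i$ is a leaf of $G_w$, so $i$ has at most one friend inside any coalition and therefore cannot witness minimum degree $2$. A clique containing $i$ can only consist of $i$ together with friends of $i$, and $j$ is the unique friend of $i$; hence $\pi(i) \in \{\{i\},\{i,j\}\}$. It therefore suffices to exclude the case $\pi(i)=\{i\}$.

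Assume for contradiction that $\pi(i)=\{i\}$, and set $S=\pi(j)$, so that $i \notin S$. The crux is to show that $S$ is a clique of size at most $2$. First I would argue that $j$ has at most one friend in $S$. If $j$ had two friends $x,y \in S$, then, since $j$ is a pseudo-center, at least one of them --- say $x$ --- is a leaf whose only friend is $j$; then $x$ has a single friend inside $S$, so $S$ cannot have minimum degree $2$ and must be a clique by Lemma~\ref{lem:IR}. But in a clique $x$ and $y$ would be adjacent, contradicting that $j$ is the only friend of $x$. With $j$ owning at most one friend in $S$, the same dichotomy forces $S$ to be a clique (it cannot reach minimum degree $2$ through $j$), and a clique in which $j$ has at most one friend has size at most $2$. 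Thus $S=\{j\}$ or $S=\{j,x\}$ with $x \neq i$.

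Finally I would produce the forbidden deviation. If $S=\{j\}$, then already in $\pi$ the player $i$ can join $j$: the move from $\{i\}$ to $\{j\}$ strictly improves $i$ (it gains its friend $j$), is accepted by $j$, and is vacuously accepted by $i$'s empty former coalition, so it is a CIS-deviation and $\pi$ is not even CIS, a contradiction. If $S=\{j,x\}$, I delete the single player $x$; in the resulting subgame the coalition $S$ becomes $\{j\}$ while $i$ remains a singleton, and the same move of $i$ into $\{j\}$ is a CIS-deviation, contradicting the CIS-robustness of $\pi$ under deletion of one player. Either way $\pi(i)=\{i\}$ is impossible, leaving $\pi(i)=\{i,j\}$, as desired. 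I expect the main obstacle to be the middle step, namely pinning down $\pi(j)$ to a clique of size at most two: this is exactly where the pseudo-center hypothesis must be combined with the clique-versus-minimum-degree dichotomy of Lemma~\ref{lem:IR}, and once $\pi(j)$ is this small the single deletion that triggers the deviation is immediate.
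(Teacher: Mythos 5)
Your proof is correct and takes essentially the same approach as the paper's: Lemma~\ref{lem:IR} restricts $\pi(i)$ to $\{i\}$ or $\{i,j\}$, the pseudo-center hypothesis rules out $|\pi(j)|\ge 3$ (one of $j$'s two friends there would be a leaf with a single friend and an enemy in $\pi(j)$), and a CIS-deviation of $i$ to $j$'s coalition, after deleting the possible second member of $\pi(j)$, excludes $\pi(i)=\{i\}$. The only difference is organizational: the paper cases directly on $|\pi(j)|\in\{1,2,\ge 3\}$, whereas you first bound $|\pi(j)|\le 2$ and then exhibit the deviation.
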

\begin{proof}
By Lemma \ref{lem:IR}, $i$'s coalition is either the singleton $\{i\}$ or the pair $\{i,j\}$. 
Assume towards a contradiction that $\pi(i)=\{i\}$. 
If $|\pi(j)| =1$, then $i$ has a CIS-deviation to $j$'s coalition, a contradiction. 
If $|\pi(j)| =2$, then removing the player $h \neq j$ in $\pi(j)$ would cause the CIS-deviation of $i$ to $j$'s coalition, a contradiction. 
If $|\pi(j)| \geq 3$, then this means that $j$ has at least two friends $a,b$ in $\pi(j)$ by Lemma \ref{lem:IR}. However, this means that at least one of the players $a,b$ has only one friend $j$ but has at least one enemy in $\pi(j)$, contradicting Lemma \ref{lem:IR}. In either case, we obtain a contradiction. 
\end{proof}

The above lemma can recursively apply to all such pairs in the following way: as long as there is an edge $\{i,j\}$ satisfying the property in Lemma \ref{lem2:IR}, we need to put the players into a pair and examine whether such an edge still exists in the remaining instance. This allows us to partially determine the structure of a CIS-robust outcome. Figure \ref{fig:fixpair} illustrates the sequence of pairs of players that need to be formed in a robust outcome. We now formalize the above idea as follows. For a friendship graph $G_w$, a sequence of edges $\{i_t,j_t\}$ for $t=1,2,\ldots,t^*$ is called an {\em outer elimination sequence} if the following two hold:  
\begin{itemize}
\item[$(${\rm E1}$)$] $j_t$ is the unique friend of $i_t$ in $G_t$, and $j_t$ is a pseudo-center in $G_t$; or
\item[$(${\rm E2}$)$] $j_t$ is the unique friend of $i_t$ in $G_t$, and $i_t$ is a friend of some player in $\bigcup^{t-1}_{h=1}\{i_h,j_h\}$. 
\end{itemize}
Here $G_t=G_w \setminus \bigcup^{t-1}_{h=1}\{i_h,j_h\}$ for each $t=1,2,\ldots,t^*$. An outer elimination sequence $(\{i_t,j_t\})_{t=1,2,\ldots,t^*}$ is said to be {\em maximal} if it cannot be made any longer, i.e., there is no outer elimination sequence $(\{i_t,j_t\})_{t=1,2,\ldots,t^*+1}$.

\begin{lemma}\label{cor:pair}
For a symmetric friend-oriented game $(N,w)$ and $k=1$, let $\pi$ be an arbitrary CIS-robust partition. If there is an outer elimination sequence $\{i_t,j_t\}$ for $t=1,2,\ldots,t^*$, then we have $\pi(i_t)=\{i_t,j_t\}$ for each $t=1,2,\ldots,t^*$. 
\end{lemma}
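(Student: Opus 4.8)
The plan is to prove the statement by induction on $t$, establishing $\pi(i_t)=\{i_t,j_t\}$ for each $t=1,2,\ldots,t^*$. The base case $t=1$ is immediate: since $\bigcup_{h=1}^{0}\{i_h,j_h\}$ is empty, condition (E2) is vacuous, so the first edge must satisfy (E1), i.e., $j_1$ is the unique friend of $i_1$ in $G_1=G_w$ and $j_1$ is a pseudo-center in $G_w$. Lemma \ref{lem2:IR} then yields $\pi(i_1)=\{i_1,j_1\}$ directly.

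For the inductive step, suppose $\pi(i_h)=\{i_h,j_h\}$ for all $h<t$. First I would record that the pairs $\{i_h,j_h\}$ are pairwise disjoint, since each $\{i_t,j_t\}$ lies in $G_t$, whose vertex set excludes all earlier pairs; thus the inductive hypothesis places every vertex of $\bigcup_{h<t}\{i_h,j_h\}$ into a fixed two-element coalition. The crucial consequence is that $\pi(i_t)$ and $\pi(j_t)$ are both contained in $V(G_t)=N\setminus\bigcup_{h<t}\{i_h,j_h\}$: if, say, $\pi(i_t)$ contained a removed vertex $x\in\{i_h,j_h\}$, then $\pi(x)=\pi(i_t)\ni i_t$ would contradict $\pi(x)=\{i_h,j_h\}$. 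This lets me reason entirely inside $G_t$, as if it were the whole game. Since $j_t$ is the unique friend of $i_t$ in $G_t$, player $i_t$ has at most one friend inside $\pi(i_t)\subseteq V(G_t)$, so Lemma \ref{lem:IR} (with $k+1=2$) forces $\pi(i_t)$ to be a clique contained in $\{i_t,j_t\}$; hence $\pi(i_t)\in\{\{i_t\},\{i_t,j_t\}\}$, and it remains only to rule out $\pi(i_t)=\{i_t\}$.

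Assuming $\pi(i_t)=\{i_t\}$, I would split on which of (E1), (E2) the edge satisfies. Under (E1) I would replay the proof of Lemma \ref{lem2:IR} inside $G_t$, analyzing $\pi(j_t)\subseteq V(G_t)$ by size: if $|\pi(j_t)|=1$ then $i_t$ has a CIS-deviation to join $j_t$ already in $\pi$; if $|\pi(j_t)|=2$ then deleting the partner of $j_t$ creates the same deviation, contradicting CIS-robustness; and if $|\pi(j_t)|\ge 3$ then by Lemma \ref{lem:IR} the coalition $\pi(j_t)$ is a clique or has minimum degree at least $2$, so $j_t$ has two friends in $\pi(j_t)$, each of which has at least two friends in $\pi(j_t)\subseteq V(G_t)$ and is thus a non-leaf of $G_t$, contradicting that $j_t$ is a pseudo-center of $G_t$. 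Under (E2) instead, $i_t$ has a friend $p\in\bigcup_{h<t}\{i_h,j_h\}$; by the inductive hypothesis $p$ lies in a pair $\{p,q\}\in\pi$, and deleting the partner $q$ leaves both $i_t$ and $p$ as singletons, so $i_t$ obtains a CIS-deviation to $p$, again contradicting CIS-robustness. Either way $\pi(i_t)=\{i_t,j_t\}$, completing the induction.

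The main obstacle I anticipate is the bookkeeping that justifies working inside $G_t$: one must verify that the inductive hypothesis genuinely confines $\pi(i_t)$ and $\pi(j_t)$ to the non-removed vertices, and that the pseudo-center property of $j_t$ \emph{in $G_t$} (rather than in the original $G_w$) is exactly what drives the $|\pi(j_t)|\ge 3$ case. Everything else reduces to the two deviation arguments already present in Lemmas \ref{lem:IR} and \ref{lem2:IR}; the genuinely new content lies in transporting those arguments to the shrinking graph $G_t$ and in supplying the (E2) deviation by deleting a paired vertex.
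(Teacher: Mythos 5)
Your proof is correct and follows essentially the same route as the paper's: induction on $t$, the base case via Lemma~\ref{lem2:IR}, Lemma~\ref{lem:IR} to reduce to $\pi(i_t)\in\{\{i_t\},\{i_t,j_t\}\}$, and the same (E1)/(E2) deviation arguments to rule out the singleton. Your added bookkeeping (confining $\pi(i_t),\pi(j_t)$ to $V(G_t)$) and the minor reorganizations (contradicting the pseudo-center property directly in the $|\pi(j_t)|\ge 3$ case, and letting $i_t$ rather than its paired friend deviate in (E2)) only make explicit what the paper leaves implicit.
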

\begin{proof}
We prove the statement by induction on $t$. When $t=1$, the claim holds due to Lemma \ref{lem2:IR}. Suppose that the claim holds for $t \leq d-1$ and we prove it for $t=d$. Now by the induction hypothesis, $\pi(i_h)=\{i_h,j_h\}$ for each $h=1,2,\ldots,t-1$, and thus players $i_{t}$ and $j_{t}$ form a coalition within $G_{t}$. Now, we have either $\pi(i_{t})=\{i_{t}\}$ or $\pi(i_{t})=\{i_{t},j_{t}\}$ by Lemma \ref{lem:IR}. Assume towards a contradiction that $\pi(i_{t})=\{i_{t}\}$.
First suppose that $j_{t}$ is a pseudo-center in $G_t$. Again, if $|\pi(j_t)| \geq 3$, then $\pi(j_t)$ contains at least two friends $a,b$ of $j_t$ by Lemma \ref{lem:IR} where at least one of the players has only one friend $j_t$ in $\pi(j_t)$, contradicting Lemma \ref{lem:IR}. Thus, $j_{t}$ either stays alone at $\pi$ or forms a coalition with his another friend in $G_{t}$; however, in the former case, $i_{t}$ would have a CIS-deviation to $j_{t}$; and in the latter case, deleting the other friend of $j_{t}$ would cause the CIS-deviation of $i_{t}$ to $\pi(i_{t})$, a contradiction.  
Second, suppose that $i_{t}$ is a friend of some player $i \in \bigcup^{t-1}_{h=1}\{i_h,j_h\}$. Then, by removing $j \in \pi(i)$ with $j \neq i$, player $i$ would have a CIS-deviation to $\pi(i_{t})$, a contradiction. We thus conclude that $\pi(i_{t})=\{i_{t},j_{t}\}$. 
\end{proof}

\begin{figure}[htb]
\centering
\begin{tikzpicture}[scale=0.8, transform shape, every node/.style={minimum size=6mm, inner sep=1pt}]
	\node[draw, circle](1) at (-3,0) {$i_1$};
	\node[draw, circle](2) at (-2,0) {$j_1$};
	\node[draw, circle](3) at (-1,0) {$i_2$};
	\node[draw, circle](4) at (0,0) {$j_2$};
	
	\node[draw, circle](5) at (1,1) {$i_3$};
	\node[draw, circle](7) at (1,-1) {$i_4$};
	\node[draw, circle](6) at (2.2,1) {$j_3$};
	\node[draw, circle](8) at (2.2,-1) {$j_4$};
	\draw[-, >=latex,thick] (1)--(2) (2)--(3) (3)--(4) (4)--(5) (6)--(5) (8)--(7) (4)--(7) (8)--(6);

	\draw (-2.5,0) ellipse (28pt and 20pt);
	\draw (-0.5,0) ellipse (28pt and 20pt);
	\draw (1.6,1) ellipse (28pt and 20pt);
	\draw (1.6,-1) ellipse (28pt and 20pt);
	
	\begin{scope}[xshift=8cm]
	\node[draw, circle](1) at (-3,0) {$i_1$};
	\node[draw, circle](2) at (-2,0) {$j_1$};
	\node[draw, circle](3) at (-1,0) {$i_2$};
	\node[draw, circle](4) at (0,0) {$j_2$};
	
	\node[draw, circle](5) at (1,1) {$i_3$};
	\node[draw, circle](7) at (1,-1) {$s$};
	\node[draw, circle](6) at (1,0) {$j_3$};
	\draw[-, >=latex,thick] (1)--(2) (2)--(3) (3)--(4) (4)--(5) (6)--(5) (6)--(7) (7)--(4);

	\draw (-2.5,0) ellipse (26pt and 18pt);
	\draw (-0.5,0) ellipse (26pt and 18pt);
	\draw (1,0.5) ellipse (18pt and 28pt);
	\end{scope}
\end{tikzpicture}
\caption{Examples of maximal outer elimination sequences. Observe that the left friendship graph admits a unique CIS-robust partition that consists of elimination pairs, whereas the right friendship graph admits no CIS-robust partition since deleting $i_3$ would cause the CIS-deviation of $j_3$ to $s$.}
\label{fig:fixpair}
\end{figure}
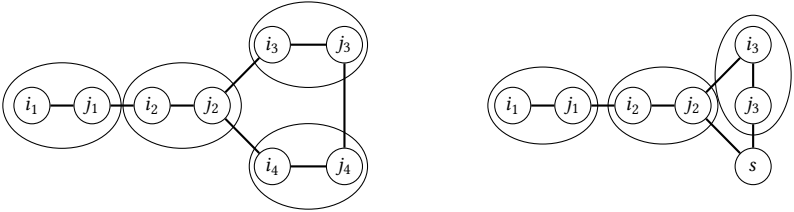

Given a symmetric friend oriented game $(N,w)$, we say that a pair of players is an {\em elimination pair} if it appears in some outer elimination sequence. We denote by $P_w$ the set of players who belong to some elimination pair, by $S_w$ the set of players who have exactly one friend in $N \setminus P_w$, by $B_w$ the set of players who have at least two friends in $N \setminus (P_w \cup S_w)$, and by $R_w$ the set of remaining players, i.e., $R_w=N \setminus (P_w \cup S_w \cup B_w)$. 
Before we proceed, we observe the following. 

\begin{lemma}\label{lem}
For a symmetric friend oriented game $(N,w)$, each player in $S_w$ has no friend in $P_w$. 
\end{lemma}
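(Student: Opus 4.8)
The plan is to argue by contradiction. Suppose some $s \in S_w$ has a friend $p \in P_w$. By the definition of $S_w$, the player $s$ lies in $N \setminus P_w$ and has a single friend $f$ in $N \setminus P_w$; since $p$ is a further friend, $s$ has at least two friends, and because $s \notin P_w$ it never belongs to an elimination pair, so $s$ is never deleted and therefore survives in every graph $G_t$ of every outer elimination sequence. I will reach a contradiction by exhibiting an outer elimination sequence in which $s$ \emph{itself} becomes an elimination pair, which would force $s \in P_w$.

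The first step is a role-restriction observation. In any outer elimination sequence, a friend $q$ of $s$ with $q \in P_w$ can only be removed in the pseudo-center role $q = j_t$, never in the leaf role $q = i_t$. Indeed, if $q = i_t$, then by definition $j_t$ is the unique friend of $q$ in $G_t$; but $s$ is present in $G_t$ and is a friend of $q$, so $s$ must coincide with $j_t$, placing $s \in P_w$, which is impossible. Applying this to $p$, a sequence witnessing $p \in P_w$ eliminates $p$ as $p = j_{t_0}$, and $s$ is a neighbour of $p$ in $G_{t_0}$ of degree at least two (it still has $f$ and $p$ as friends there).

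The second step produces the contradiction from a \emph{maximal} outer elimination sequence $\sigma$. Every vertex deleted by $\sigma$ belongs to an elimination pair, hence to $P_w$, so $f$ (being outside $P_w$) is never deleted. If every friend of $s$ that lies in $P_w$ is deleted by $\sigma$, then at the end of $\sigma$ the only surviving friend of $s$ is $f$: the player $s$ is a leaf whose unique friend is $f$, and $s$ is adjacent to the already-deleted player $p$. Condition $(\mathrm{E2})$ then permits appending the pair $\{s,f\}$ to $\sigma$, contradicting maximality; this contradiction establishes that $s$ has no friend in $P_w$.

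The crux, and the step I expect to be the main obstacle, is the claim that a maximal sequence deletes every friend of $s$ lying in $P_w$; equivalently, that the deleted vertex set of a maximal outer elimination sequence is exactly $P_w$, independently of the order of eliminations. This is a confluence (order-independence) property of the process, made delicate by the fact that $(\mathrm{E1})$ and $(\mathrm{E2})$ impose different side conditions on each removed pair. The hard part will be to prove it by an exchange argument: starting from a maximal sequence deleting the largest possible number of $s$'s $P_w$-friends, I would show that any not-yet-deleted friend $q \in P_w$ of $s$ can be spliced into the sequence, using that $q$ is eliminable from $G_w$ by its own witnessing sequence and that $(\mathrm{E2})$ lets eliminations spread from the already-processed region, thereby contradicting maximality and forcing all such $q$ to be deleted.
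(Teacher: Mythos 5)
Your skeleton is the same as the paper's: assume $s \in S_w$ has a friend $p \in P_w$, arrange for \emph{all} of $s$'s friends in $P_w$ to be deleted by a single outer elimination sequence, and then append $\{s,f\}$ via $(\mathrm{E2})$, contradicting maximality (equivalently, forcing $s \in P_w$). The conditional part of your argument is sound: if a maximal sequence really does delete every $P_w$-friend of $s$, then $f$ (never deleted, being outside $P_w$) is $s$'s unique surviving friend, $s$ is adjacent to a deleted player, and $\{s,f\}$ can be appended. You have also correctly isolated the one point the paper's own three-line proof glosses over: membership of each $p_\ell$ in $P_w$ only gives, for each $\ell$ separately, \emph{some} sequence deleting $p_\ell$, whereas the argument needs one sequence deleting all of them.

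The problem is that the claim you put forward as the crux is false, so your route to closing this gap cannot work. You assert that the deleted vertex set of a maximal outer elimination sequence equals $P_w$, independently of the order of eliminations, and that this is equivalent to what you need. The paper's own Figure~\ref{fig:fixpair} (right graph) refutes it: there the only possible start is $\{i_1,j_1\},\{i_2,j_2\}$, after which \emph{either} $\{i_3,j_3\}$ \emph{or} $\{s,j_3\}$ may be appended, and whichever is chosen leaves the other leaf isolated forever. Hence $i_3$ and $s$ both belong to $P_w$, so $|P_w|=7$ is odd, while every maximal sequence deletes exactly six players; no maximal deleted set equals $P_w$, and the deleted set genuinely depends on the order of eliminations. (This non-uniqueness is exactly why condition $(\mathrm{i})$ of Theorem~\ref{thm:CIS} must be imposed as a hypothesis rather than holding automatically.) Consequently your claimed equivalence is wrong --- the statement you actually need, that under the contradiction hypothesis some single sequence deletes all of $s$'s $P_w$-friends, is strictly weaker than order-independence --- and any correct proof of it must make essential use of the standing hypothesis that $s$ itself is never deletable yet adjacent to these players; a generic confluence or exchange argument about elimination sequences is doomed, since confluence fails in general. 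Since you explicitly defer this step (``the hard part will be to prove it''), the central claim of your proof is both unproven and, in the form stated, false; the role-restriction observation in your first step is correct but is never actually used. This is a genuine gap, not a repair of the paper's terseness.
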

\begin{proof}
Suppose that there is a player $i \in S_w$ who is a friend of some player in $P_w$. Let $j$ be the unique friend of $i$ in $N \setminus P_w$. Then, the pair of players $i$ and $j$ is an elimination pair satisfying the condition $(${\rm E2}$)$, a contradiction. 
\end{proof}

\begin{lemma}\label{lem:degree}
For a symmetric friend oriented game $(N,w)$, let $j \in R_w$. Then, $j$ has no friend in $N \setminus P_w$. Further if there is a CIS-robust outcome, $j$ has no friend in $P_w$. 
\end{lemma}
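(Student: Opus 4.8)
The plan is to prove the two claims separately, reusing throughout two facts that follow immediately from the definitions and Lemma~\ref{lem}. First, $S_w$ and $P_w$ are disjoint: a vertex $s\in S_w$ has no friend in $P_w$ by Lemma~\ref{lem}, so if $s$ also lay in $P_w$, its elimination partner (which is a friend of $s$ in $P_w$) would contradict this. Consequently $S_w\subseteq N\setminus P_w$, and moreover every $s\in S_w$ is a \emph{leaf} of $G_w$ whose unique friend lies in $N\setminus P_w$ (its single friend in $N\setminus P_w$, together with the absence of $P_w$-friends). For the first statement I would argue contrapositively: if $j\in R_w$ has a friend in $N\setminus P_w$, then since $j\notin S_w$ it has at least two such friends, and since $j\notin B_w$ at most one of them lies in $N\setminus(P_w\cup S_w)$; as $N\setminus P_w=(N\setminus(P_w\cup S_w))\cup S_w$, the player $j$ must have a friend $s\in S_w$, which by the above is a leaf whose only friend is $j$. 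It then suffices to show that $\{s,j\}$ is forced to be an elimination pair, because then $s\in P_w$, contradicting $s\in S_w$.

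To show $\{s,j\}$ is an elimination pair, I would first record a \emph{persistence} property of the process: the conditions governing whether a pair can be appended after a set $X$ of already-removed players — here, ``$j$ is the unique friend of $s$ in $G_w\setminus X$'' and ``$j$ is a pseudo-center in $G_w\setminus X$'' — are both preserved when $X$ is enlarged (unique-friendship and the property of having at most one non-leaf neighbor are monotone under vertex deletion), and the alternative trigger $(${\rm E2}$)$ of having a friend inside $X$ only becomes easier. Using this, the idea is to delete, one after another, the non-leaf neighbors of $j$ that lie in $P_w$ (each is individually eliminable because it belongs to $P_w$), while retaining $j$ and $s$. Since $j\in R_w$ forces at most one non-leaf neighbor of $j$ to lie outside $P_w$, and the $S_w$-neighbors of $j$ stay leaves, once all the $P_w$-side non-leaf neighbors are gone the vertex $j$ has at most one non-leaf neighbor, i.e. $j$ is a pseudo-center; at that point $\{s,j\}$ satisfies $(${\rm E1}$)$. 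The hard part will be exactly this joint deletion: the process is \emph{not} confluent — a maximal outer elimination sequence need not exhaust $P_w$ (eliminating one leaf of a star strands its other leaves) — so one cannot simply pass to a maximal sequence and read off $G_w\setminus P_w$. One must instead schedule the witnessing sub-sequences of the individual $P_w$-neighbors into a single valid sequence, invoking persistence to ensure that removing one of them never destroys the eliminability of the next as long as $j$ and $s$ are kept. I expect this scheduling argument to be the technical heart of the proof.

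For the second statement I would assume a CIS-robust outcome $\pi$ exists and use $k=1$; recall that in a symmetric friend-oriented game CIS-robustness implies IR-robustness, so Lemma~\ref{lem:IR} applies to $\pi$. By Lemma~\ref{cor:pair}, each player in an elimination pair is placed together with its partner in $\pi$; hence every $P_w$-player sits in a coalition of size exactly two consisting of two $P_w$-players. By the first statement, all of $j$'s friends lie in $P_w$, and since those players are paired among themselves none of them can share $j$'s coalition; thus $j$ has no friend inside $\pi(j)$. Lemma~\ref{lem:IR} (with $k=1$, giving minimum degree at least $2$) then forces $\pi(j)$ to be a clique or to have minimum degree at least $2$; as $j$ has degree $0$ within $\pi(j)$, the only possibility is the singleton clique $\pi(j)=\{j\}$. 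Finally, suppose for contradiction that $j$ had a friend $p\in P_w$, with elimination partner $q$ so that $\pi(p)=\{p,q\}$. Deleting the single player $q$ leaves $p$ alone and $j$ still alone; since $p$ and $j$ are friends, $p$ has a CIS-deviation from $\{p\}$ into $\{j\}$ — it strictly prefers gaining the friend $j$, the sole member $j$ of the target coalition accepts its friend $p$, and $p$'s own coalition is a singleton — so $\pi_{-\{q\}}$ is not contractually individually stable, contradicting CIS-robustness. Hence $j$ has no friend in $P_w$.
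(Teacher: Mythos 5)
Your second claim and its proof are correct and coincide with the paper's own argument: Lemma~\ref{cor:pair} pairs every $P_w$-player with its partner inside $\pi$, the first claim together with Lemma~\ref{lem:IR} forces $\pi(j)=\{j\}$, and deleting the partner $q$ of a hypothetical friend $p\in P_w$ of $j$ gives $p$ a CIS-deviation into $\{j\}$, contradicting CIS-robustness. Your contrapositive case analysis for the first claim (at least two friends in $N\setminus P_w$, at most one of them outside $S_w$, hence a friend $s\in S_w$ that is a leaf whose unique neighbor is $j$) is also exactly the paper's three-case argument, merely reorganized. Up to that point you and the paper take the same route.

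The gap is in what remains: the assertion that $\{s,j\}$ must be an elimination pair, which you explicitly leave unproven (``the technical heart''). This is not a routine detail. Since $s$'s only friend is $j$, condition (E2) can never fire for the pair $\{s,j\}$, so one must exhibit a single outer elimination sequence after whose prefix $j$ is a pseudo-center, i.e.\ a single sequence that removes or leaf-ifies all but at most one of $j$'s non-leaf neighbors, and those neighbors lie in $P_w$. Your plan --- schedule the witnessing sequences of the individual $P_w$-neighbors, ``invoking persistence'' --- does not go through as stated: persistence (monotonicity of unique-friendship, pseudo-centrality, and the (E2) trigger under vertex deletion) applies only when the current removed set \emph{contains} the witness's removed set, and that containment is exactly what breaks when two witnesses conflict. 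If a pair of the second witness has one endpoint already removed, its other endpoint is stranded; from then on the merged removed set no longer contains the witness's, persistence is silent, and the stranded vertex can destroy the unique-friendship or pseudo-centrality needed by the witness's later pairs. So the sentence ``removing one of them never destroys the eliminability of the next'' is precisely the statement requiring proof, not a consequence of persistence. In fairness, the paper is no more complete here: its proof simply declares that $i$ together with $j$ is an elimination pair (and its proof of Lemma~\ref{lem} makes the same bare assertion about (E2)), so you have correctly isolated a genuine subtlety that the paper elides --- but isolating it is not closing it. A workable repair seems to require reasoning about a \emph{maximal} sequence directly: $j$ and $s$ survive any such sequence, maximality is violated the moment $j$ becomes a pseudo-center, and one must then show that a maximal sequence cannot strand two non-leaf neighbors of $j$ with one of them in $P_w$; that is an induction over what maximal sequences can leave behind, a genuinely different and more involved argument than the merge you sketch.
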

\begin{proof}
Consider $j \in R_w$. Assume towards a contradiction that $j$ has some friend in $N \setminus P_w$.
If $j$ has a friend $i \in S_w$, then $i$ together with $j$ is an elimination pair and must be included in $P_w$, a contradiction. 
If $j$ has exactly one friend in $N \setminus (P_w \cup S_w)$, then this means that $j \in S_w$, a contradiction.
If $j$ has at least two friends in  $N \setminus (P_w \cup S_w)$, then this means that $j \in B_w$, a contradiction. Hence $j$ has no friend in $N \setminus P_w$. 

Further assume otherwise that there is a CIS-robust outcome $\pi$ but there is a player $i \in P_w$ who is a friend of $j$. 
By Lemma \ref{cor:pair}, $i$ forms a pair, say with player $h$ at $\pi$. 
By Lemma \ref{cor:pair} and Lemma \ref{lem:IR}, $j$ stays alone at $\pi$. Thus deleting $h$ would cause the CIS-deviation of $i$ to $j$, a contradiction. 
\end{proof}

Figure \ref{fig:decomposition} illustrates the partition of the player set into $P_w,S_w,B_w$, and $R_w$. Now, a CIS-robust outcome must include all the elimination pairs, and hence if such outcome exists, there is at most one maximal outer elimination sequence. We can thus completely characterise the class of symmetric friend-oriented games that admit a CIS-robust outcome under deletion of a single player. 

\begin{theorem}\label{thm:CIS}
For a symmetric friend-oriented game and $k=1$, a CIS-robust outcome exists if and only if the following holds:
\begin{itemize}
\item[$(${\rm i}$)$] the set of elimination pairs that appear in each maximal elimination sequence is the same; and
\item[$(${\rm ii}$)$] there are no elimination pairs $\{i,j\}$ and $\{u,v\}$ where $i$ is a friend of both $u$ and $v$; and
\item[$(${\rm iii}$)$] for each player $i \in P_w$ and each player $j \in R_w$, $i$ and $j$ are enemies to each other; and
\item[$(${\rm iv}$)$] if there is a player $i \in P_w$ who is a friend of every player in $B_w$, then every player $j \in B_w$ is a friend of exactly one player in $S_w$ and an enemy of at least one player in each elimination pair.  
\end{itemize}
\end{theorem}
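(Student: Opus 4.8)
The plan is to prove the biconditional by exploiting the fact that the shape of \emph{any} CIS-robust partition $\pi$ is almost entirely forced by the earlier lemmas, and then to match these constraints with an explicit construction. The common skeleton for both directions is the following. Every elimination pair is a coalition of $\pi$ (Lemma~\ref{cor:pair}); every player of $R_w$ is a singleton, since by Lemma~\ref{lem:degree} such a player has no friend at all and any larger coalition would violate individual rationality; and by Lemma~\ref{lem:IR} (with $k=1$) every remaining coalition is a clique or has minimum degree at least $2$. Moreover an $S_w$ player has exactly one friend overall (none in $P_w$ by Lemma~\ref{lem}, none in $R_w$ by Lemma~\ref{lem:degree}, and none in $S_w$, because two mutually unique $S_w$-friends would themselves form an elimination pair via (E1)); consequently any coalition containing an $S_w$ player and of size at least two must be a clique of size exactly two, namely that player with its unique friend in $B_w$. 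Each $B_w$ player in turn has at least two friends inside $B_w$. These facts reduce everything to how $B_w$ is grouped, and they will also power the invasion arguments, since in a friend-oriented game one friend already outweighs any number of enemies.

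For necessity I would assume a CIS-robust $\pi$ exists. Condition (iii) is precisely the second assertion of Lemma~\ref{lem:degree}. Condition (i) follows from Lemma~\ref{cor:pair}: the pairs of every maximal sequence are coalitions of the single partition $\pi$, so they are mutually consistent and $\pi$ realizes all of them simultaneously; two disagreeing maximal sequences would force some player into two distinct pairs. For (ii) I delete the partner $j$ of $i$ in its pair; $i$ becomes a singleton, and if it were a friend of both $u$ and $v$ of another elimination pair $\{u,v\}$ (a coalition by Lemma~\ref{cor:pair}) it would have a CIS-deviation into $\{u,v\}$, since both members accept a common friend and $i$'s emptied coalition vetoes nothing. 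Condition (iv) is the delicate case: assuming some $i_0\in P_w$ is a friend of all of $B_w$, I argue that every coalition meeting $B_w$ must contain an enemy of $i_0$ (else deleting $i_0$'s partner lets $i_0$ invade it); this enemy cannot lie in $B_w$, nor in $P_w$ (whose coalitions are elimination pairs disjoint from $B_w$), nor in $R_w$ (no friends), so it lies in $S_w$, and the clique-of-size-two fact then forces each $B_w$ coalition to be a pair $\{b,s_b\}$ with $s_b\in S_w$ the unique friend of $b$. This forced pairing yields both clauses of (iv): a second $S_w$-friend of $b$ would be a singleton that $b$ invades after $s_b$ is deleted, and being a friend of both members of some elimination pair would let $b$ invade that pair after $s_b$ is deleted.

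For sufficiency I would exhibit a partition and verify robustness under deletion of any single player. It always contains the elimination pairs and the $R_w$ singletons, and handles $B_w\cup S_w$ by the dichotomy of (iv). If no player of $P_w$ is a friend of all of $B_w$, I place all of $B_w$ in one coalition (minimum degree at least $2$ inside $B_w$, so one deletion leaves every survivor a friend and preserves individual rationality) and all $S_w$ players as singletons; if some $i_0\in P_w$ is a friend of all of $B_w$, then (iv) makes $b\mapsto s_b$ a perfect matching between $B_w$ and $S_w$ and I pair each $b$ with $s_b$. The verification is a bounded case check driven by one observation: a paired player, or a $P_w$ player while its partner survives, can never leave, because a surviving friend in its coalition vetoes the departure; hence the only dangerous events are deletions that isolate a player. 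An isolated $P_w$ player is blocked from elimination pairs by (ii), from $R_w$ singletons by (iii), from the whole-$B_w$ coalition by its guaranteed enemy there, and from every $B_w$-pair by the $S_w$ partner, which is a non-friend by Lemma~\ref{lem}; an isolated $B_w$ player in the second construction is blocked from other pairs by their $S_w$ partners and from elimination pairs by the enemy clause of (iv); and an $S_w$ singleton can never enter the size-at-least-three $B_w$ coalition, since its single friend leaves some coalition member to veto it, even after one deletion.

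The main obstacle is the sufficiency verification together with the (iv) dichotomy: one must check that the two constructions are exhaustive and that in each of them \emph{every} single deletion is covered, in particular that a deletion which removes the very enemy blocking an invasion cannot re-enable it. This is exactly why the first construction keeps $B_w$ as one block (so a would-be invader with a unique $B_w$-enemy is stopped unless that enemy is deleted, and in that case the invader is still a paired player whose partner is untouched) and why the second construction relies on the $S_w$ partners rather than on the internal structure of $B_w$. A secondary subtlety is the confluence argument for (i), namely making precise that once a CIS-robust partition is known to realize all elimination pairs at once, the maximal outer elimination sequences must determine the same matching.
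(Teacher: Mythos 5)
Your proposal is correct and follows essentially the same route as the paper's own proof: the same use of Lemmas \ref{lem:IR}, \ref{cor:pair}, \ref{lem}, and \ref{lem:degree} to force the elimination pairs, the $R_w$ singletons, and the pair-shaped coalitions around $S_w$ players; the same single-deletion deviation arguments for the necessity of (i)--(iv); and the same two-case construction (one big $B_w$ coalition versus a $B_w$--$S_w$ matching, according to the antecedent of (iv)) with a case check for sufficiency. The one spot where you are terser than the paper is condition (i): your claim that two disagreeing maximal sequences must force some player into two distinct pairs needs the additional observation that (E1)/(E2) persist when further disjoint pairs are removed, so a pair present in one maximal sequence but missing from another could be appended to the latter, contradicting its maximality---this is exactly the appending argument the paper gives for the disjoint-sequence case.
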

\begin{proof}
Suppose that there is a CIS-robust outcome $\pi$. To show $(${\rm i}$)$, take any maximal elimination sequence $(\{i_t,j_t\})_{t=1,2,\ldots,t^*}$ and $(\{a_h,b_h\})_{h=1,2,\ldots,s^*}$. If there are two elimination pairs $\{i_t,j_t\}$ and $\{a_h,b_h\}$ with $i_t=a_h$ and $j_t \neq b_h$, this would imply that $\pi(i_t)=\{i_t,j_t\}=\{i_t,b_h\}$ by Lemma \ref{cor:pair}, a contradiction. If the two sequences are disjoint, then one can create a longer elimination sequence by adding edge $\{a_1,b_1\}$ to the last position of the other sequence, contradicting maximality. 
To see  $(${\rm ii}$)$, if there are two pairs $\{i,j\}$ and $\{u,v\}$ where $i$ is a friend of both $u$ and $v$, then $\pi$ has to include both pairs, which however implies that $i$ would have a CIS-deviation to the coalition $\{u,v\}$ after the removal of player $j$, a contradiction. 
The statement $(${\rm iii}$)$ holds due to Lemma \ref{lem:degree}.
To see $(${\rm iv}$)$, assume that some player $i \in P_w$ is a friend of every player in $B_w$. Consider any player $j \in B_w$. If $\pi(j) \subseteq B_w$, then by deleting the other friend $h \neq i$ with $h \in \pi(i)$, $i$ would have a CIS-deviation to $\pi(j)$, a contradiction. Thus we have $\pi(j) \not \subseteq B_w$. Further, each player in $R_w$ has no friend and hence stays alone at $\pi$ by Lemma \ref{lem:IR}. This means $\pi(j)\cap R_w=\emptyset$, and thus $\pi(j) \cap S_w \neq \emptyset$. However, if $j$ has no friend in $S_w$ or multiple friends in $S_w$, player in $S_w$ who belongs to $j$'s coalition has no friend in $\pi(j)$ or has at most one friend and at least one enemy in $\pi(j)$, contradicting Lemma \ref{lem:IR}. Hence $j$ has exactly one friend $s$ in $S_w$; by Lemma \ref{lem:IR} and by the fact that $\pi(j) \cap S_w \neq \emptyset$, we have $\pi(j)=\{j,s\}$. If there is an elimination pair $\{u,v\}$ where both of them are adjacent to $j$, then $j$ would have a CIS-deviation to the coalition $\{u,v\}$ after removal of $s$. Hence, $j$ is an enemy of at least one player in each elimination pair. 

Conversely, suppose that all the properties $(${\rm i}$) - (${\rm iv}$)$ hold. Let $P_w=\bigcup^{t^*}_{t=1}\{i_t,j_t\}$ where $\{i_t,j_t\}_{t=1,2,\ldots,t^*}$ is a maximal outer elimination sequence. We note that $P_w$ is empty if there is no elimination pair. We define the partition $\pi$ as follows:
First, for each $t=1,2,\ldots,t^*$, we set $\pi(i_t)=\{i_t,j_t\}$. Second, for each player $i \in R_w$, we set $\pi(i)=\{i\}$. Finally, we partition the players in $B_w$ and $S_w$ as follows. 
\begin{itemize}
\item If there is a player $i \in P_w$ who is a friend of every player in $B_w$, we put each $j \in B_w$ and the unique friend of $j$ in $S_w$ into a pair. 
\item Otherwise, all players in $B_w$ form a coalition and put each player in $S_w$ into a singleton. 
\end{itemize}

Since each player belongs to a clique or has at least two friends in his coalition, $\pi$ can be easily seen to be IR-robust under deletion of a single player. Now take any player $i \in N$. We will show that $i$ has no CIS-deviation even after removal of a single player. Now consider the following cases. 
\begin{itemize}
\item $i \in R_w$: By Lemma \ref{lem:degree} and $(${\rm iii}$)$, $i$ is an isolated vertex of the friendship graph and hence has no incentive to deviate to the other coalitions, even after deletion of a single player. 
\item $i \in B_w$: By definition, $i$ has at least two friends in his coalition, or $i$ forms a pair with his friend in $S_w$. If $\pi(i)$ contains at least two friends of $i$, $i$ has no CIS-deviation even after removal of an arbitrary single player, as there is at least one friend of $i$ in $\pi(i)$ who would be worse off by the deviation of $i$. 
Suppose $i$ forms a pair with his friend $j$ in $S_w$. In order for $i$ to have a CIS-deviation to other coalitions, $j$ must disappear; but then, every other coalition still contains an enemy of $i$, thereby implying that $i$ has no CIS-deviation to any other coalition even a single player disappears.  
\item $i \in S_w$: Note that $i$ is adjacent to only one player $j$ in $B_w$. Thus, $i$ has an incentive to deviate to another coalition $X \in \pi$ only if $X=B_w$. However, $j$ has at least two friends in $B_w$ and hence $B_w$ contains at least one enemy of $i$ even after removal of an arbitrary single player. Thus, $i$ has no CIS-deviation even after removal of a single player. 
\item $i \in P_w$: 
%S_w
First, by Lemma \ref{lem} and $(${\rm iii}$)$, $i$ has no incentive to join a singleton included in $S_w$ or $R_w$. 
%P_w
Second, player $i$ has no CIS-deviation to another elimination pair even after deletion of any single player, since by $(${\rm ii}$)$, at least one player in his coalition or the deviating pair would be worse off by the deviation of $i$. 
%B_w
Finally, the property $(${\rm iv}$)$ further ensures that player $i$ has no CIS-deviation to a coalition containing a player $j \in B_w$. Indeed, player $i$ has a CIS-deviation to such coalition only when the other player of $i$'s coalition disappears; but then $j$'s coalition contains at least one enemy of $i$. Thus, $i$ has no CIS-deviation even after removal of a single player. 
\end{itemize}
The proof is complete. 
\end{proof}

\begin{figure}[htb]
\centering
\begin{tikzpicture}[scale=1, transform shape, every node/.style={minimum size=2mm, inner sep=1pt}]
	\node[draw, circle] at (5.1,1.3) {};
	\node[draw, circle] at (5.1,0.8) {};
	\node[draw, circle] at (5.1,0.3) {};
	
	\node[draw, circle](d) at (2,1.3) {};
	\node[draw, circle](aa) at (2.5,1.3) {};
	\node[draw, circle](bb) at (3,1.3) {};
	\node[draw, circle](cc) at (3.5,1.3) {};
	\node[draw, circle](e) at (4,1.3) {};

	\draw[-, >=latex,thick] (1.8,1.6)--(d) (aa)--(d) (bb)--(aa) (cc)--(bb) (cc)--(e) (e)--(4.2,1.6);
	
	\node[draw, circle](a) at (2.5,0.7) {};
	\node[draw, circle](b) at (3,0.7) {};
	\node[draw, circle](c) at (3.5,0.7) {};
	
	\node[draw,circle](1) at (1,1.2){};
	\node[draw,circle](2) at (0.5,1.2){};
	\node[draw,circle](3) at (1,0.85){};
	\node[draw,circle](4) at (0.5,0.85){};
	\node[draw,circle](5) at (1,0.5){};
	\node[draw,circle](6) at (0.5,0.5){};
	\draw[-,>=latex,thick] (1)--(2) (3)--(4) (5)--(6); 
	
	\node at (0.7,1.7) {\small $P_w$};
	\node at (3,1.7) {\small $B_w$};
	\node at (3,0.3) {\small $S_w$};
	\node at (5.3,1.7){\small $R_w$};
	
	\draw[-, >=latex,thick] (a)--(aa) (b)--(bb) (c)--(cc);
	
	\draw[-, >=latex,thick] (0,0)--(0,2) (0,0)--(6,0) (0,2)--(6,2) (6,2)--(6,0);
	\draw[-,>=latex,thick] (1.5,2)--(1.5,0) (4.5,2)--(4.5,0) (4.5,1)--(1.5,1); 
\end{tikzpicture}
\caption{A partition of the player set into $P_w,S_w,B_w,R_w$.
\label{fig:decomposition}
}
\end{figure}
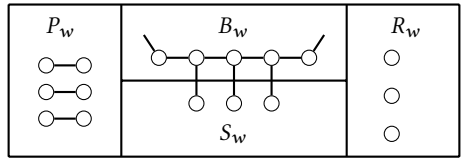

Building on the above characterization, it is easy to see that we can decide in polynomial time whether a symmetric friend-oriented game admits a CIS-robut outcome under deletion of a single player. The proof employs a simple procedure, which iteratively expands an outer elimination sequence and eventually decompose the player set into $P_w$, $S_w$, $B_w$, and $R_w$.

\begin{theorem}\label{thm:CIS:poly}
For a symmetric friend-oriented game and $k=1$, deciding the existence of a CIS-robust outcome can be done in polynomial time. 
\end{theorem}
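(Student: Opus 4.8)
The plan is to convert the structural characterization of Theorem~\ref{thm:CIS} into a decision procedure whose only nontrivial ingredient is the computation of a maximal outer elimination sequence. First I would compute such a sequence greedily: maintain the current graph $G_t$, scan its edges for one satisfying (E1) or (E2) — each test only inspects the degree of a candidate leaf $i_t$, the pseudo-center status of its unique neighbour $j_t$, and membership in the partial sequence, all of which are local — add the first eligible edge, delete its two endpoints, and repeat until no eligible edge remains. Every step removes two vertices, so there are at most $n/2$ iterations, each costing $O(n+m)$; this runs in polynomial time and returns the set $P_w$ of players covered by elimination pairs. From $P_w$ I would then read off $S_w$, $B_w$, and $R_w$ directly from their definitions, by counting for each player its friends in $N \setminus P_w$, then in $N \setminus (P_w \cup S_w)$, and so on, which is again polynomial.

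Next I would build a single candidate partition $\pi$ exactly as in the converse direction of the proof of Theorem~\ref{thm:CIS}: each elimination pair $\{i_t,j_t\}$ becomes a coalition, each player of $R_w$ forms a singleton, and the players of $B_w \cup S_w$ are grouped either by matching every $j \in B_w$ with its unique friend in $S_w$ (when some player of $P_w$ is a friend of all of $B_w$) or by letting $B_w$ form one coalition with $S_w$ split into singletons otherwise. I would then decide existence by testing directly whether this one partition $\pi$ is CIS-robust under deletion of a single player: for the no-deletion case and for each of the $n$ single-player deletions, check every player for a CIS-deviation to every coalition of the resulting partition. This is brute force but clearly polynomial.

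Correctness rests entirely on Theorem~\ref{thm:CIS}. If some CIS-robust outcome exists, then conditions (i)--(iv) hold; in particular (i) guarantees that every maximal outer elimination sequence yields the same pair-set, so the greedily computed $P_w$ is the correct one, the induced $S_w, B_w, R_w$ are determined, and $\pi$ coincides with the partition shown to be CIS-robust in the converse of Theorem~\ref{thm:CIS}; hence the direct test accepts. Conversely, if no CIS-robust outcome exists, then \emph{no} partition is CIS-robust, so in particular $\pi$ fails the test and the algorithm rejects. Thus the single robustness check is equivalent to the conjunction of (i)--(iv).

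The step I expect to be the main obstacle is condition (i). The greedy elimination is \emph{not} confluent — for example, on a path on five vertices distinct greedy choices can pair the neighbour of an endpoint with either adjacent leaf, producing genuinely different maximal pair-sets — so a naive attempt to verify the characterization would have to reconcile all maximal sequences, which is not obviously tractable. The device that avoids this is precisely the final step: instead of checking (i) explicitly, I verify the one concrete candidate $\pi$ for CIS-robustness and appeal to Theorem~\ref{thm:CIS} to argue the equivalence. The whole decision then reduces to one maximal-elimination computation followed by one polynomial robustness check.
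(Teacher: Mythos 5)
Your proposal is correct and follows essentially the same route as the paper: the paper's Algorithm~\ref{alg:CIS} likewise greedily extends an outer elimination sequence to build $P$, reads off $S$, $B$, $R$, constructs exactly the candidate partition from the converse direction of Theorem~\ref{thm:CIS}, and ends with a direct polynomial-time check of CIS-robustness, accepting if and only if that check passes. Your correctness argument (in yes-instances condition (i) forces the greedy $P$ to equal $P_w$, so the candidate coincides with the partition proved robust in Theorem~\ref{thm:CIS}; in no-instances no partition is robust, so the final check fails) is the same as the paper's, with your non-confluence remark making explicit a point the paper leaves implicit.
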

\begin{proof}
Consider the following algorithm which takes as input a friendship graph $G_w$ and returns a CIS-robust outcome if it exists. The algorithm recursively finds an elimination pair and divides the players into the three sets $P$, $S$, $B$, and $R$. By Theorem \ref{thm:CIS}, the algorithm returns a CIS-robust partition if it exists. 
Indeed, if there is a CIS-robust outcome, the algorithm correctly computes $P=P_w$; and we have shown in the proof of Theorem \ref{thm:CIS}, the output $\pi$ of Algorithm \ref{alg:CIS} is CIS-robust under deletion of a single player. Conversely, if there is no CIS-robust outcome, $\pi$ is not CIS-robust and thus Algorithm \ref{alg:CIS} fails. Clearly, Algorithm \ref{alg:CIS} runs in polynomial time.

\begin{algorithm}                      
\caption{Algorithm for deciding the existence of CIS-robust outcomes}         
\label{alg:CIS}                          
\begin{algorithmic}[1]                  
\REQUIRE a friendship graph $G_w$ of a symmetric friend-oriented game $(N,w)$
\ENSURE CIS-robust partition $\pi$ if it exists
\STATE initialize $\pi \leftarrow \emptyset$
\STATE $P \leftarrow \emptyset$
\WHILE{there is a pair of players $i,j \in N \setminus P$ such that $(${\rm E1}$)$ $j$ is the unique friend of $i$ in $G_w \setminus P$ and $j$ is a pseudo-center in $G_w\setminus P$, or $(${\rm E2}$)$ $j$ is the unique friend of $i$ in $G_w \setminus P$, and $i$ is a friend of some player in $P$}
\STATE $\pi \leftarrow \pi \cup \{\{i,j\}\}$ and $P  \leftarrow P \cup \{i,j\}$
\ENDWHILE
\STATE set $S$ to be the set of players $j$ in $N \setminus P$ who have only one friend $f_j$ in $N \setminus P$. 
\STATE set $B$ to be the set of players who have at least two friends in $N \setminus (P \cup S)$. 
\STATE set $R \leftarrow N\setminus (P \cup S \cup B)$
\STATE set $\pi \leftarrow \pi \cup \{\, \{i\} \mid i \in R\,\}$
\IF{there is a player $i \in P$ who is adjacent to every player in $B$}
\STATE $\pi \leftarrow \pi \cup \{\, \{j,f_j\} \mid j \in S\,\}$
\ELSE
\STATE $\pi \leftarrow \pi \cup \{B\} \cup \{\, \{i\} \mid i \in S \,\}$
\ENDIF
\IF{$\pi$ is a CIS-robust partition}
\RETURN $\pi$
\ELSE
\RETURN {\em fail}
\ENDIF
\end{algorithmic}
\end{algorithm}
\end{proof}

The above result turns out to be tight in several aspects. We first show that for $k=2$, finding a CIS-robust outcome of a symmetric friend-oriented game is NP-hard. 

\begin{theorem}\label{thm:NPh:CIS}
For a symmetric friend-oriented game $(N,w)$, it is NP-complete to decide the existence of a CIS-robust outcome even for $k=2$. 
\end{theorem}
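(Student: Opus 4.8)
The plan is to reduce the verification of CIS-robustness to a clean combinatorial condition and then build a gadget reduction from 3-SAT on top of it. Membership in NP is immediate: a candidate partition $\pi$ is certified CIS-robust under deletion of at most two players by checking, for each of the $O(n^2)$ sets $S$ with $|S|\le 2$, that $\pi_{-S}$ is CIS in the corresponding subgame, and each such check is polynomial. The substance is the hardness direction, and for that I would first isolate the following local characterization. Fix $k=2$. A partition $\pi$ is CIS-robust under deletion of at most two players if and only if (a) every $S\in\pi$ is a clique or has minimum degree at least $3$, which is exactly IR-robustness by Lemma \ref{lem:IR} together with its straightforward converse, and (b) for every coalition $A\in\pi$ with $|A|\le 3$, every player $v\in A$, and every other coalition $B\in\pi$ in which $v$ has a friend, $v$ has at least $4-|A|$ enemies in $B$. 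The proof of (b) is the crux: in a friend-oriented game an IR partition admits a CIS-deviation of $v$ only when $v$ is first reduced to a singleton and then sits adjacent to every surviving member of some other coalition. A coalition of size at most $3$ is necessarily a clique, isolating $v$ inside it costs $|A|-1$ deletions, while a coalition of size at least $4$ (or of minimum degree at least $3$) cannot be reduced to a singleton within two deletions; exposing $B$ costs exactly the number of enemies of $v$ in $B$. A deviation fits the budget precisely when $(|A|-1)+(\text{enemies of }v\text{ in }B)\le 2$ with a friend of $v$ still left in $B$, and negating this yields (b). Thus a singleton needs three enemies, a paired player two, and a player in a triangle one enemy in every other coalition containing one of its friends, whereas players in coalitions of size at least four are automatically safe.

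\textbf{The reduction.} Given a 3-CNF formula $\varphi$ — I would take a bounded-occurrence variant so that all degrees stay bounded — I would construct a symmetric friend-oriented game whose friendship graph is the disjoint assembly of a variable gadget per variable and a clause gadget per clause, linked through shared literal vertices. The whole design is dictated by condition (b). Each variable gadget would be engineered so that, among partitions satisfying (a)--(b), it has exactly two admissible local configurations, encoding $x=\mathsf{true}$ and $x=\mathsf{false}$: a central switch vertex must join one of two triangles, and that choice fixes, for each literal vertex of $x$, whether it lands in a small coalition with room to absorb a neighbour (the literal is ``available'' to a clause) or is locked into a saturated one. Each clause gadget would be built around a triangle on its three literal vertices plus auxiliary enemy vertices, arranged so that it can be completed into coalitions meeting (b) if and only if at least one of its literal vertices arrives in the available state. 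The cross-edges between the two sides would be calibrated against the enemy thresholds of (b) so that an available literal supplies precisely the single enemy a clause-triangle player requires, while an unavailable one does not.

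\textbf{Correctness.} For the forward direction, a satisfying assignment yields a partition by selecting the matching configuration in each variable gadget and grouping each clause around a satisfied literal, and condition (b) is then verified gadget by gadget. For the converse, any CIS-robust partition induces, through the two forced local configurations, a truth assignment, and the completability of every clause gadget forces that assignment to satisfy $\varphi$; the equivalence therefore follows from the characterization above applied globally.

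\textbf{Main obstacle.} The difficulty is concentrated entirely in the gadget design, and it has two intertwined sources. First, I must rule out \emph{all} unintended coalitions — especially large cliques and minimum-degree-$3$ coalitions, which (b) certifies as automatically safe and which could otherwise open escape routes that collapse the intended binary choice; keeping each gadget locally sparse and triangle-controlled, with no $K_4$ and no vertex of degree $3$ whose neighbourhood closes into a dense block, is delicate. Second, I must calibrate the cross-edges so that the \emph{asymmetric} thresholds of (b) (three enemies for singletons, two for pairs, one for triangles) transmit exactly one bit per literal to the clause gadget, neither leaking extra freedom that would make an unsatisfiable instance feasible nor over-constraining a satisfiable one; my preliminary attempts show that the naive ``two triangles sharing a switch'' gadget fails because the freed literal vertices can form a coalition entirely friendly to the switch, violating the triangle case of (b). Reconciling both requirements within a single bounded-degree construction, and confirming that robustness holds across gadget boundaries rather than only inside each gadget, is where essentially all of the work lies.
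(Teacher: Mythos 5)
Your NP-membership argument is fine, and your local characterization is in fact correct: in a symmetric friend-oriented game, a CIS-deviation out of an individually rational coalition is available only to a player whose own coalition has been reduced to a singleton (any surviving co-member is either a friend, who blocks the departure, or an enemy, which would contradict individual rationality), so given condition (a) a violation of CIS-robustness for $k=2$ is exactly a choice of $v$, $A=\pi(v)$ with $|A|\le 3$, and a coalition $B$ containing a friend of $v$ with $(|A|-1)+|\{\text{enemies of }v\text{ in }B\}|\le 2$. This cleanly packages the reasoning that the paper carries out on the fly via Lemma~\ref{lem:IR}, and it would make verifying any concrete reduction easier.

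The problem is that the proof stops there. The hardness direction --- which is the entire content of the theorem --- is only a declaration of intent: the variable and clause gadgets ``would be engineered'' and ``would be built,'' no adjacency lists are given, no forced-configuration lemma is stated or proved, and you yourself report that your one concrete attempt fails (the freed literal vertices form a coalition entirely friendly to the switch, violating your own condition (b)). A reduction whose gadgets are not exhibited cannot be checked, and the correctness section (``verified gadget by gadget'') is vacuous without them; as you candidly note, ``essentially all of the work'' remains undone. For contrast, the paper reduces from \textsc{Exact-3-Cover} rather than 3-SAT, which removes the need for a variable/clause dichotomy: each element $v$ gets a vertex gadget $\{v,f^1_v,f^2_v,e_v\}$ (a $K_4$ minus the edge $\{v,e_v\}$) that Lemma~\ref{lem:IR} forces to resolve as $\pi(f^1_v)=\{f^1_v,f^2_v,e_v\}$, leaving $v$ obliged to find two friends elsewhere; each set $S=\{u,v,w\}$ gets a nine-player gadget whose triangles $\{v,S^1_v,S^2_v\}$ are the only IR-robust coalitions that can supply those friends, with internal cross-edges ($S^1_u$--$S^1_v$, $S^2_u$--$S^2_w$, $S^1_w$--$S^2_v$) ensuring that selecting the triangle for one element of $S$ forces it for all three, so that the selected sets form an exact cover. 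Until you produce gadgets with that level of specificity and prove both forcing directions against your characterization, the proposal establishes NP-membership plus a verification lemma, not NP-hardness.
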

\begin{proof}
CIS-robustness can be verified in polynomial time: for each set $X \subseteq N$ of size at most two, one can check in polynomial time whether $\pi_{-X}$ is contractually individually stable. So our problem is in NP. To show hardness, we give a reduction from {\sc Exact-3-Cover (X3C)}. Recall that an instance of {\sc X3C} is given by a set of elements $V=\{v_1,v_2,\dots, v_{3r}\}$ and a family $\calS$ of three-element subsets of $V$; it is a `yes'-instance if and only if there is an {\em exact cover} $\calS' \subseteq \calS$ with $|\calS'|=r$ and $\bigcup_{S \in \calS'}S = V$. 

{\em Construction}: Given an instance $(V,\calS)$ of X3C, we construct an instance of a friend-oriented game as follows. For each $v \in V$, we create a {\em vertex player} $v$. For each vertex $v \in V$, we create a {\em vertex gadget} $G_v$, which enforces the corresponding vertex player $v$ to have at least two friends in his robust coalition. Specifically, $G_v$ consists of vertex player $v$, two friends $f^1_v$ and $f^2_v$ of $v$, and one enemy $e_v$ of $v$. All the three players $f^1_v$, $f^2_v$, and $e_v$ are friends to each other, $f^1_v$ and $f^2_v$ are enemies of all the vertex players except for $v$, and the player $e_v$ is an enemy of all the vertex players. Figure 4(a) illustrates $G_v$. For each $S=\{u,v,w\} \in \calS$, we create a {\em set gadget} $G_S$ which consists of its vertex players $u,v,w$, and cliques $\{S^1_v,S^2_v,S^3_v\}$ for $v \in S$. Specifically, $S^1_v$ and $S^2_v$ are a friend of $v$ for each $v \in S$; $S^3_u, S^3_v,S^3_w$ form a clique; and the pairs of $S^1_u$ and $S^1_v$, $S^2_u$ and $S^2_w$, and $S^1_w$ and $S^2_v$ are friends to each other. 
See Figure 4(b) for an illustration. Unless specified otherwise, players are enemies to each other. Finally, we set $k=2$. 

\begin{figure}[htb]
\begin{subfigure}[t]{0.4\columnwidth}
\centering
\begin{tikzpicture}[scale=0.6, transform shape, every node/.style={minimum size=8mm, inner sep=1pt}]
	\node[draw, circle, fill=gray!70](0) at (0,1) {$v$};
	\node[draw, circle](1) at (-1,0) {$f^1_{v}$};
	\node[draw, circle](2) at (1,0) {$f^2_{v}$};
	\node[draw, circle](3) at (0,-1) {$e_{v}$};
	\draw[-, >=latex,thick] (0)--(1);
	\draw[-, >=latex,thick] (0)--(2);
	\draw[-, >=latex,thick] (2)--(1);
	\draw[-, >=latex,thick] (2)--(3);
	\draw[-, >=latex,thick] (3)--(1);
\end{tikzpicture}
	\caption{Vertex gadget $G_v$}
	\label{gadget1}
\end{subfigure}%
\centering
\begin{subfigure}[t]{0.4\columnwidth}
\begin{tikzpicture}[scale=0.6, transform shape, every node/.style={minimum size=6mm, inner sep=1pt}]
	\node[draw, circle, fill=gray!70](u) at (-3.2,-0.5) {$u$};
	\node[draw, circle](u1) at (-2.3,0.5) {$S^1_u$};
	\node[draw, circle](u2) at (-2.3,-1.5) {$S^2_u$};
	
	\draw[-, >=latex,thick] (u)--(u2) (u)--(u1) (u2)--(u1);
	
	\node[draw, circle, fill=gray!70](v) at (0,3) {$v$};
	\node[draw, circle](v1) at (-1,2) {$S^1_v$};
	\node[draw, circle](v2) at (1,2) {$S^2_v$};
	
	\draw[-, >=latex,thick] (v)--(v2) (v)--(v1) (v2)--(v1);
	
	\node[draw, circle, fill=gray!70](w) at (3.2,-0.5) {$w$};
	\node[draw, circle](w1) at (2.3,0.5) {$S^1_w$};
	\node[draw, circle](w2) at (2.3,-1.5) {$S^2_w$};
	
	\draw[-, >=latex,thick] (w)--(w2) (w)--(w1) (w2)--(w1);
	
	\node[draw, circle](v3) at (0,1) {$S^3_v$};
	\node[draw, circle](u3) at (-1,-0.5) {$S^3_u$};
	\node[draw, circle](w3) at (1,-0.5) {$S^3_w$};
	
	\draw[-, >=latex,thick] (u3)--(v3) (u3)--(w3) (w3)--(v3);
	
	\draw[-, >=latex,thick] (u3)--(u1) (u3)--(u2);
	\draw[-, >=latex,thick] (w3)--(w1) (w3)--(w2);
	\draw[-, >=latex,thick] (v3)--(v1) (v3)--(v2);
	
	\draw[-, >=latex,thick] (u1)--(v1) (u2)--(w2) (v2)--(w1);
\end{tikzpicture}
\caption{Set gadget $G_S$ for $S=\{u,v,w\}$}
	\label{gadget2}
\end{subfigure}
\caption{Gadgets constructed in the proofs of Theorem~\ref{thm:NPh:CIS}. The grey nodes correspond to vertex players.}
\label{fig:NPh}
\end{figure}
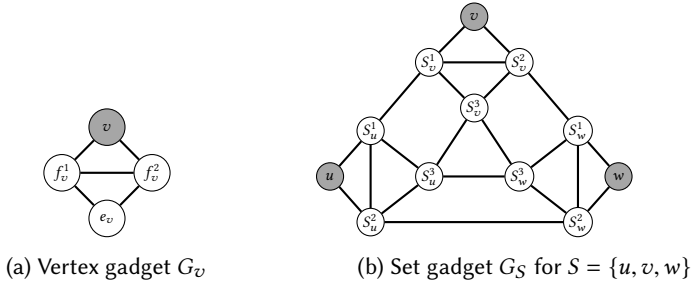

%{\em Correctness}: Suppose that there is an exact cover $\calS' \subseteq \calS$. Then, we define $\pi$ as follows. 
%For each set $S \in \calS'$ and each $v \in S$, we set $\pi(v)=\{v,S^1_v,S^2_v\}$; the remaining players of the set gadget forms a coalition, i.e., $\pi(S^3_u)=\{S^3_u,S^3_v,S^3_w\}$. For each $S \not \in \calS'$, the non-vertex players in the set gadget $G_s$ form a coalition, i.e., $\pi(S^1_u)=\{\, S^i_v \mid i=1,2,3 \land v \in S \,\}$. For each $v \in V$, we set $\pi(f^1_v)=\{f^1_v,f^2_v,e_v\}$. The resulting partition $\pi$ can be easily verified to be IR-robust. Also, as each player has at least two friends in his coalition and one enemy in the other coalitions, no player has a CIS-deviation to other coalitions even after an arbitrary player disappears. 

%Conversely, let $\pi$ be a CIS-robust partition. To maintain robustness within each vertex gadget $G_v$, for $v \in V$, it can be verified that the three players $f^1_v,f^2_v,e_v$ must form a coalition, which means that each vertex player $v$ must have at least two friends in his coalition. Thus for each $v \in V$, we have $\{v,S^1_v,S^2_v\}$ for some $S \in \calS$ with $v \in S$. Further, to maintain robustness within each set gadget, if $\pi(v)=\{v,S^1_v,S^2_v\}$, then each vertex player $u\in S$ also {\em selects} $S$, i.e., $\pi(u)=\{u,S^1_u,S^2_u\}$. Now let $\calS'=\bigcup_{v \in V}\{\, S \in \calS \mid \pi(v)=\{v,S^1_v,S^2_v\}\,\}$. Then it can be easily verified that $\calS'$ is an exact cover.

{\em Correctness}: 
Suppose that there is an exact cover $\calS' \subseteq \calS$. Then, we define $\pi$ as follows. For each set $S \in \calS'$ and each $v \in S$, we set $\pi(v)=\{v,S^1_v,S^2_v\}$; the remaining players of the set gadget forms a coalition, i.e., $\pi(S^3_u)=\{S^3_u,S^3_v,S^3_w\}$. For each $S \not \in \calS'$, the non-vertex players in the set gadget $G_s$ form a coalition, i.e., $\pi(S^1_u)=\{\, S^i_v \mid i=1,2,3 \land v \in S \,\}$. For each $v \in V$, we set $\pi(f^1_v)=\{f^1_v,f^2_v,e_v\}$. 
Since $\calS'$ is an exact cover, it can be easily verified that $\pi$ is a partition of the player set. Since each coalition is either a clique or a coalition with minimum degree at least three, $\pi$ is IR-robust under deletion of at most two players. To show that $\pi$ satisfies CIS-robustness, we note that in order for player $i$ to have a CIS-deviation to some coalition $X$, all her friends in $\pi(i)$ and all her enemies in $X$ must disappear. Thus, no player has a CIS-deviation to the coalitions $X$ of form $\{\, S^i_v \mid i=1,2,3 \land v \in S \,\}$ as every player not in $X$ has at least three enemies in $X$.  Also, no player has a CIS-deviation to the coalitions $X$ of form $\{v,S^1_v,S^2_v\}$ or $\{f^1_v,f^2_v,e_v\}$ as each player has at least two friends in his coalition and one enemy in $X$.  

To show the opposite direction, let $\pi$ be a CIS-robust partition. Consider a player $f^1_v$ for $v \in V$; we will show that $\pi(f^1_v)= \{f^1_v,f^2_v,e_v\}$. First, assume towards a contradiction that $v \in \pi(f^1_v)$. Recall that by Lemma \ref{lem:IR}, each player must be in a clique or in a coalition with minimum degree $k+1=3$. As there is no coalition containing $f^1_v$ with minimum degree three, $\pi(f^1_v)$ forms a clique of size at most three and we have either $\pi(f^1_v)=\{v,f^1_v\}$ or $\pi(f^1_v)=\{v,f^1_v,f^2_v\}$. When $\pi(f^1_v)=\{v,f^1_v\}$, player $f^2_v$ forms a singleton or a pair with $e_v$; however, either case admits the CIS-deviation of $f^1_v$ to $\pi(f^2_v)$ when player $v$ disappears, a contradiction. When $\pi(f^1_v)=\{v,f^1_v,f^2_v\}$, $e_v$ forms a singleton and hence player $e_v$ would have a CIS-deviation to $\pi(f^1_v)$ after player $v$ disappears, a contradiction. Thus, $v \not \in \pi(f^1_v)$ and hence $\pi(f^1_v) \subseteq \{f^1_v,f^2_v,e_v\}$. If $\pi(f^1_v) \neq \{f^1_v,f^2_v,e_v\}$, then there is a player $j \in \{f^1_v,f^2_v,e_v\}$ who forms a singleton and has a CIS-deviation to the other coalition of $i \in \{f^1_v,f^2_v,e_v\}$ with $i  \neq j$. Thus, for each $v \in V$, $\pi(f^1_v)= \{f^1_v,f^2_v,e_v\}$. 

This means that each vertex player $v$ needs to have at least two friends in his coalition as otherwise removing his enemy $e_v$ and one friend of $v$'s coalition would cause the CIS-deviation of $v$ to the remaining players $\{f^1_v,f^2_v\}$. Now the only IR-robust way to do this is to select triples of form $\{v,S^1_v,S^2_v\}$ where $S \in \calS$ and $v \in S$, and put them into a coalition. 

We will next show that if $\pi(v)=\{v,S^1_v,S^2_v\}$ for some $S=\{u,v,w\} \in \calS$, then it must be the case that $\pi(u)=\{u,S^1_u,S^2_u\}$ and $\pi(w)=\{w,S^1_w,S^2_w\}$. Suppose towards a contradiction that $\pi(v)=\{v,S^1_v,S^2_v\}$ for some $S=\{u,v,w\} \in \calS$ but $\pi(u)=\{u,T^1_u,T^2_u\}$ for some $T \in \calS$ with $T \neq S$. We note that $S^3_v$ needs to have at least two friends in his coalition; otherwise he would deviate to the coalition $\pi(v)$ after $v$ and one friend of $S^3_v$ in his coalition disappear. Thus, by Lemma \ref{lem:IR}, $S^3_v$ is in a clique of size three and it must be the case that $\pi(S^3_v)=\{S^3_u,S^3_v,S^3_w\}$. Then by Lemma \ref{lem:IR}, we have the following cases:
\begin{itemize}
\item $\pi(S^1_u)=\{S^1_u\}$ and $\pi(S^2_u) = \{S^2_u\}$: This would cause the CIS-deviation of $S^1_u$ to $\pi(S^2_u)$. 
\item $\pi(S^1_u)=\{S^1_u\}$ and $\pi(S^2_u) = \{S^2_u,S^2_w\}$: This would cause the CIS-deviation of $S^1_u$ to $\pi(S^2_u)$ after deletion of $S^2_w$. 
\item $\pi(S^1_u)=\{S^1_u,S^2_u\}$: This would cause the CIS-deviation of $u$ to $\pi(S^1_u)$ after removing two friends $T^1_u,T^2_u$ of $u$. 
\end{itemize}
In either case, we obtain a contradiction, and thus $\pi(u)=\{u,S^1_u,S^2_u\}$. Similarly, we have $\pi(w)=\{w,S^1_w,S^2_w\}$. Now let
\[
\calS'=\bigcup_{v \in V}\{\, S \in \calS \mid \pi(v)=\{v,S^1_v,S^2_v\}\,\}. 
\]
Clearly, each player $v \in V$ appears in some set in $\calS'$ Also, there is no pair of distinct sets $S,T \in \calS'$ that includes the same vertex player $v$, as otherwise, this would mean that $\pi(v)=\{v,S^1_v,S^2_v\}=\{v,T^1_v,T^2_v\}$, a contradiction. We conclude that $\calS'$ is an exact cover.
\end{proof}

A similar proof of Theorem \ref{thm:NPh:CIS} shows that finding an IS-robust outcome is NP-hard even if only a single player is allowed to disappear. 
\begin{theorem}\label{thm:NPh:IS}
For a symmetric friend-oriented game $(N,w)$, deciding the existence of an IS-robust outcome is NP-complete even for $k=1$. 
\end{theorem}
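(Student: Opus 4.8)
The plan is to prove NP-completeness by a reduction from \textsc{Exact-3-Cover}, reusing the gadget architecture from the proof of Theorem~\ref{thm:NPh:CIS} but re-tuning it to individual stability and a single deletion ($k=1$). Membership in NP is immediate: since $k=1$, IS-robustness of a partition $\pi$ is certified by verifying that $\pi$ is individually stable and that, for each of the $n$ players $x$, the contracted partition $\pi_{-\{x\}}$ is individually stable in the corresponding subgame. Each such check is polynomial because, in a friend-oriented game, player $i$ has an IS-deviation into a coalition $X$ precisely when $i$ is a friend of \emph{every} member of $X$ and $X$ gives $i$ strictly more friends than its current coalition (equivalently, $|X|$ exceeds $i$'s current friend-count, or ties it while $i$ has an enemy in its own coalition). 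I would make this friend-counting criterion explicit at the outset, as it drives the whole analysis.

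First I would set up the instance: keep the vertex gadget $\{v,f^1_v,f^2_v,e_v\}$ and the set gadget $G_S$ of Theorem~\ref{thm:NPh:CIS} (with $f^1_v,f^2_v,e_v$ a triangle, $v$ a friend of $f^1_v,f^2_v$ and an enemy of $e_v$), and the intended partition for a \textsf{yes}-instance with exact cover $\calS'$ is again: each $v$ sits in its unique set-triple $\{v,S^1_v,S^2_v\}$, every $S^3$-triple forms its triangle, every unselected gadget collapses its non-vertex players into one coalition, and $\{f^1_v,f^2_v,e_v\}$ is a triangle. To show this partition is IS-robust I would first confirm it is IR-robust by Lemma~\ref{lem:IR} (every coalition is a clique or has minimum degree at least two), and then, exploiting that an IS-deviation needs the deviator to be a friend of every current member of the target, check coalition by coalition that each rival coalition either retains at least two enemies of $i$ (so a single deletion cannot make $i$ friendly with all of it) or already matches $i$'s friend-count. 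The single-deletion budget makes this verification noticeably tighter than the two-deletion version.

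For the converse I would first show IS-robustness forces $\pi(f^1_v)=\{f^1_v,f^2_v,e_v\}$. The new and cleaner ingredient relative to the CIS argument is that, because an IS-deviation ignores the abandoned coalition, any coalition of the form $\{v,f^1_v,f^2_v\}$ is instantly destabilized: deleting the single player $v$ leaves $\{f^1_v,f^2_v\}$, into which their common friend $e_v$ (whose only friends are $f^1_v,f^2_v$) IS-deviates. This pins each vertex player $v$ outside its own gadget, and then Lemma~\ref{lem:IR} together with the friend-counting criterion forces $v$ into a set-triple $\{v,S^1_v,S^2_v\}$. I would then replay the set-gadget analysis, showing that if one vertex player of $S=\{u,v,w\}$ occupies its triple then all three must, so that $\calS'=\{S:\pi(v)=\{v,S^1_v,S^2_v\}\}$ is a well-defined exact cover, with the cross edges $S^1_uS^1_v$, $S^2_uS^2_w$, $S^1_wS^2_v$ and the $S^3$-triangle supplying the blocking IS-deviations.

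The hard part will be this converse forcing with only one deletion. In the CIS proof the two-player budget gave ample slack to manufacture the blocking deviations inside the set gadget, and I expect that under $k=1$ I will have to re-balance the friend/enemy incidences there—for instance by augmenting the $S^3$-clique or inserting additional enemy edges—so that a single deletion, combined with the fact that IS-deviations need no approval from the vacated coalition, still forces whole sets to be selected together. Getting these incidence counts exactly right, while simultaneously keeping the \textsf{yes}-instance partition free of IS-deviations after \emph{every} single deletion, is the delicate point that distinguishes this argument from the $k=2$ CIS case.
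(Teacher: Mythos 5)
There is a genuine gap, and it lies exactly where your proposal is most confident: the reduction cannot reuse the CIS gadgets unchanged. In your construction (vertex gadget $\{v,f^1_v,f^2_v,e_v\}$ with no further attachments, set gadgets as in Theorem~\ref{thm:NPh:CIS}), consider the partition that, for every $v\in V$, puts the \emph{entire} vertex gadget $\{v,f^1_v,f^2_v,e_v\}$ into one coalition and, for every $S\in\calS$, puts all nine non-vertex players of $G_S$ into one coalition. Every coalition is a clique or has minimum degree at least $2=k+1$, so it is IR-robust by Lemma~\ref{lem:IR}; moreover, no player outside any of these coalitions is a friend of every member of it (the set players are enemies of $f^1_v,f^2_v,e_v$, and vice versa), and after any single deletion every player still has at least one friend and no all-friends target coalition, so no IS-deviation ever becomes available. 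Hence this partition is IS-robust for $k=1$ \emph{regardless of whether an exact cover exists}, and your reduction maps every X3C instance to a yes-instance. Your key forcing step (``any coalition of the form $\{v,f^1_v,f^2_v\}$ is instantly destabilized \ldots this pins each vertex player $v$ outside its own gadget'') only eliminates the three-player coalition; it says nothing about $\{v,f^1_v,f^2_v,e_v\}$, which is precisely the coalition that kills the argument, since with $k=1$ a minimum degree of $2$ already suffices for IR-robustness.

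The paper's proof repairs this in the vertex gadget, not the set gadget where you anticipated trouble: it attaches two \emph{dummy} players $d^f_v$ and $d^e_v$, friends only of $f^1_v$ and of $e_v$ respectively (and it drops the cross edges $S^1_uS^1_v$, $S^2_uS^2_w$, $S^1_wS^2_v$ from the set gadget, which are unnecessary once only minimum degree $2$ is needed). By Lemma~\ref{lem:IR} a dummy must be a singleton or paired with its unique friend, so whenever the gadget-wide coalition $\{v,f^1_v,f^2_v,e_v\}$ forms, $d^e_v$ is a singleton; deleting $f^1_v$ then drops $e_v$'s utility to $n-1$, and $e_v$ has an IS-deviation to $\{d^e_v\}$ (utility $n$, and the dummy, being a friend, accepts). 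This approval-free escape route is what makes individual stability---where the abandoned coalition has no veto---exploitable, and it is the ingredient your proposal lacks; your own friend-counting criterion is correct and your forward direction and the set-gadget forcing are in the spirit of the paper, but without the dummies (or some equivalent device destabilizing the four-player gadget coalition) the converse direction is unsalvageable.
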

\begin{proof}
Clearly, our problem is in NP. Again, we give a reduction from {\sc Exact-3-Cover (X3C)}. 

{\em Construction}: Given an instance $(V,\calS)$ of X3C, we construct an instance of a friend-oriented game as follows. For each $v \in V$, we create a {\em vertex player} $v$. For each vertex $v \in V$, we create the same vertex gadget $G_v$ as in the proof of Theorem \ref{thm:NPh:CIS} with the additional {\em dummy} players $d^f_{v}$ and $d^e_v$, each of which is adjacent to $f^1_v$ and $e_v$, respectively. See Figure 5(a) for an illustration. For each $S=\{u,v,w\} \in \calS$, we create a {\em set gadget} $G_S$ which consists of its vertex players $u,v,w$, and cliques $\{S^1_v,S^2_v,S^3_v\}$ for $v \in S$. Specifically, $S^1_v$ and $S^2_v$ are a friend of $v$ for each $v \in S$, and $S^3_u, S^3_v,S^3_w$ form a clique. See Figure 5(b) for an illustration. Unless specified otherwise, players are enemies to each other. Finally, we set $k=1$. 

\begin{figure}[htb]
\begin{subfigure}[t]{0.4\columnwidth}
\centering
\begin{tikzpicture}[scale=0.6, transform shape, every node/.style={minimum size=8mm, inner sep=1pt}]
	\node[draw, circle, fill=gray!70](0) at (0,1) {$v$};
	\node[draw, circle](1) at (-1,0) {$f^1_{v}$};
	\node[draw, circle](2) at (1,0) {$f^2_{v}$};
	\node[draw, circle](3) at (0,-1) {$e_{v}$};
	
	\node[draw, circle](d1) at (-2.3,0) {$d^f_{v}$};
%	\node[draw, circle](d2) at (2.3,0) {$d^2_{v}$};
	\node[draw, circle](d3) at (0,-2.3) {$d^e_{v}$};
	\draw[-, >=latex,thick] (0)--(1);
	\draw[-, >=latex,thick] (0)--(2);
	\draw[-, >=latex,thick] (2)--(1);
	\draw[-, >=latex,thick] (2)--(3);
	\draw[-, >=latex,thick] (3)--(1);
	
	\draw[-, >=latex,thick] (1)--(d1) (3)--(d3); 
\end{tikzpicture}
	\caption{Vertex gadget $G_v$}
	\label{gadget1:IS}
\end{subfigure}%
\centering
\begin{subfigure}[t]{0.4\columnwidth}
\begin{tikzpicture}[scale=0.6, transform shape, every node/.style={minimum size=6mm, inner sep=1pt}]
	\node[draw, circle, fill=gray!70](u) at (-3,-0.5) {$u$};
	\node[draw, circle](u1) at (-2,0.5) {$S^1_u$};
	\node[draw, circle](u2) at (-2,-1.5) {$S^2_u$};
	
	\draw[-, >=latex,thick] (u)--(u2) (u)--(u1) (u2)--(u1);
	
	\node[draw, circle, fill=gray!70](v) at (0,3) {$v$};
	\node[draw, circle](v1) at (-1,2) {$S^1_v$};
	\node[draw, circle](v2) at (1,2) {$S^2_v$};
	
	\draw[-, >=latex,thick] (v)--(v2) (v)--(v1) (v2)--(v1);
	
	\node[draw, circle, fill=gray!70](w) at (3,-0.5) {$w$};
	\node[draw, circle](w1) at (2,0.5) {$S^1_w$};
	\node[draw, circle](w2) at (2,-1.5) {$S^2_w$};
	
	\draw[-, >=latex,thick] (w)--(w2) (w)--(w1) (w2)--(w1);
	
	\node[draw, circle](v3) at (0,1) {$S^3_v$};
	\node[draw, circle](u3) at (-1,-0.5) {$S^3_u$};
	\node[draw, circle](w3) at (1,-0.5) {$S^3_w$};
	
	\draw[-, >=latex,thick] (u3)--(v3) (u3)--(w3) (w3)--(v3);
	
	\draw[-, >=latex,thick] (u3)--(u1) (u3)--(u2);
	\draw[-, >=latex,thick] (w3)--(w1) (w3)--(w2);
	\draw[-, >=latex,thick] (v3)--(v1) (v3)--(v2);
	
	%\draw[-, >=latex,thick] (u1)--(v1) (u2)--(w2) (v2)--(w1);
\end{tikzpicture}
	\caption{Set gadget $G_S$ for $S=\{u,v,w\}$}
	\label{gadget2:IS}
\end{subfigure}
\caption{Gadgets constructed in the proofs of Theorem~\ref{thm:NPh:IS}. The grey nodes correspond to vertex players.}
\label{fig:NPh:IS}
\end{figure}
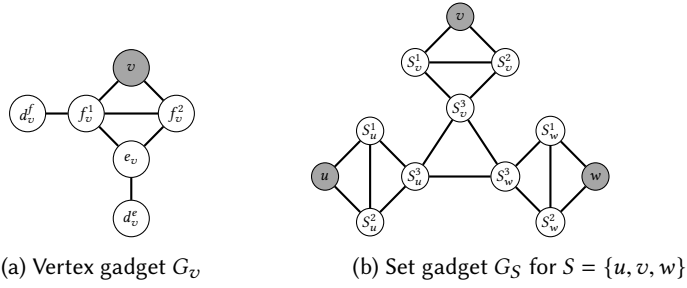

{\em Correctness}: 
Suppose that there is an exact cover $\calS' \subseteq \calS$. Again, we define $\pi$ as follows: 
\begin{itemize}
\item for each $S=\{u,v,w\} \in \calS'$, we set $\pi(u)=\{u,S^1_u,S^2_u\}$, $\pi(v)=\{v,S^1_v,S^2_v\}$, $\pi(w)=\{w,S^1_w,S^2_w\}$, and $\pi(S^3_u)=\{S^3_u,S^3_v,S^3_w\}$; and 
\item for each $S=\{u,v,w\} \not \in \calS'$, we set $\pi(S^1_u)=\{\, S^i_v \mid i=1,2,3 \land v \in S \,\}$; and
\item for each $v \in V$, we set $\pi(f^1_v)=\{f^1_v,f^2_v,e_v\}$, $\pi(d^f_{v})=\{d^f_{v}\}$ and $\pi(d^e_v)=\{d^e_v\}$.
\end{itemize}
Since $\calS'$ is an exact cover, it can be easily verified that $\pi$ is a partition of the player set. We will show that $\pi$ satisfies IS-robustness. No player has an IS-deviation to the coalitions $\{\, S^i_v \mid i=1,2,3 \land v \in S \,\}$ for $S \in \calS$, since each player outside the coalition has at least three enemies, and there is at least one player who would not accept such a deviation even after removal of a single player. Now fix $v \in V$. The only players who would have an incentive to deviate to the coalition $\{f^1_v,f^2_v,e_v\}$ are the vertex player $v$ or the dummy players $d^f_{v}$ and $d^e_v$. However, since $v$ has two friends in his coalition $\pi(v)$, and since there are two friends and one enemy of $v$ in $\{f^1_v,f^2_v,e_v\}$, $v$ would not have an IS-deviation even after removal of an arbitrary player. Also, no dummy player has an IS-deviation to $\{f^1_v,f^2_v,e_v\}$ even after removal of a single player, as at least one enemy of that player remains in the coalition. 
Further, the only player $i$ who has an incentive to deviate to a coalition $\pi(v)=\{v,S^1_v,S^2_v\}$ after removal of at most one player is $S^3_v$ or $f^j_v$ for some $j=1,2$. However, for $i=S^3_v$, there are two friends and one enemy of $i$ in $\pi(v)$, and two friends of $i$ in $\pi(i)$; thus, even after removal of $i$'s enemy from $\pi(v)$, or removal of $i$'s friends from $\pi(i)$, $i$ would not have an IS-deviation to coalition $\pi(v)$. Similarly, $i$ has no IS-deviation to $\pi(v)$ even after removal of a single player when $i=f^j_v$. Finally, it can be easily verified that no player has an IS-deviation to coalitions of dummy players even after removing an arbitrary single player. We conclude that $\pi$ is IS-robust. 

To show the opposite direction, let $\pi$ be an IS-robust partition. Again, consider player $f^1_v$ for $v \in V$; we will show that $\pi(f^1_v)= \{f^1_v,f^2_v,e_v\}$. First, assume towards a contradiction that $v \in \pi(f^1_v)$. Recall that by Lemma \ref{lem:IR}, each player must be in a pair of an edge or in a coalition with minimum degree two. Then, we have the following cases: 
\begin{itemize}
\item $\pi(f^1_v)=\{v,f^1_v\}$: Then, player $f^2_v$ forms a singleton or a pair with $e_v$, either of which admits an IS-deviation of $f^1_v$ to $\pi(f^2_v)$ when player $v$ disappears, a contradiction. 
\item $\pi(f^1_v)=\{v,f^1_v,f^2_v\}$: Then, $e_v$ forms a singleton or a pair with its neighbor, which means that player $e_v$ would have an IS-deviation to $\pi(f^1_v)$ after player $v$ disappears, a contradiction. 
\item $\{v,f^1_v,f^2_v\} \subseteq \pi(f^1_v)$ and $\pi(f^1_v)$ contains some player outside $G_v$: Then, $d^f_{v}$ forms a singleton and hence player $f^1_v$ would have an IS-deviation to $\pi(e_v)$ after player $v$ disappears, a contradiction. 
\item $\{v,f^1_v,f^2_v,e_v\} \subseteq \pi(f^1_v)$: Then, a neighbor $d^e_v$ of $e_v$ forms a singleton and hence player $e_v$ would have an IS-deviation to $\pi(d^2_v)$ after player $f^1_v$ disappears, a contradiction. 
\end{itemize}
In either case, we obtain a contradiction and thus $v \not \in \pi(f^1_v)$. Also, if $v \in \pi(f^2_v)$, then by Lemma \ref{lem:IR}, $\pi(f^2_v)=\{v,f^2_v\}$ and $\pi(f^1_v)$ is a coalition of size at most two, meaning that $f^1_v$ has an IS-deviation to $\pi(f^2_v)$ after the other player in her coalition disappears. Hence, $v \not \in \pi(f^2_v)$. 
If $\pi(f^1_v) \neq \{f^1_v,f^2_v,e_v\}$, then player $f^2_v$ either stays alone or forms a pair with her neighbor. In the former case, $f^2_v$ would have an IS-deviation to some neighboring coalition after one of the dummy players disappear. In the latter case, there is a player $j \in \{f^1_v,f^2_v,e_v\} \setminus \pi(f^2_v)$ who forms a singleton or a pair, and has an IS-deviation to coalition $\pi(f^2_v)$. Thus, for each $v \in V$, $\pi(f^1_v)= \{f^1_v,f^2_v,e_v\}$, and hence $v$ needs to have at least two friends in his coalition as otherwise removing his enemy $e_v$ would cause the IS-deviation of $v$ to the remaining players $\{f^1_v,f^2_v\}$. Now the only IR-robust way to do this is to select triples of form $\{v,S^1_v,S^2_v\}$ where $S \in \calS$ and $v \in S$, and put them into a coalition. 

We will now show that if $\pi(v)=\{v,S^1_v,S^2_v\}$ for some $S=\{u,v,w\} \in \calS$, then it must be the case that $\pi(u)=\{u,S^1_u,S^2_u\}$ and $\pi(w)=\{w,S^1_w,S^2_w\}$.
Now suppose towards a contradiction that $\pi(v)=\{v,S^1_v,S^2_v\}$ for some $S=\{u,v,w\} \in \calS$ but $\pi(u)=\{u,T^1_u,T^2_u\}$ for some $T \in \calS$ with $T \neq S$. We note that $S^3_v$ needs to have at least two friends in his coalition; otherwise he would have an IS-deviation to the coalition $\pi(v)$ after $v$ disappears. Thus, by Lemma \ref{lem:IR}, it must be the case that $\pi(S^3_v)=\{S^3_u,S^3_v,S^3_w\}$. Then by Lemma \ref{lem:IR}, we have the following cases:
\begin{itemize}
\item $\pi(S^1_u)=\{S^1_u\}$ and $\pi(S^2_u)=\{S^2_u\}$: This would cause the IS-deviation of $S^1_u$ to $\pi(S^2_u)$. 
\item $\pi(S^1_u)=\{S^1_u,S^2_u\}$: This would cause the IS-deviation of $u$ to $\pi(S^1_u)$ after removing a friend $T^1_u$ of $u$. 
\end{itemize}
In either case, we obtain a contradiction, and thus $\pi(u)=\{u,S^1_u,S^2_u\}$. Similarly, we have $\pi(w)=\{w,S^1_w,S^2_w\}$. Now let
\[
\calS'=\bigcup_{v \in V}\{\, S \in \calS \mid \pi(v)=\{v,S^1_v,S^2_v\}\,\}. 
\]
Clearly, $\calS'$ is a cover. Also, there is no pair of distinct sets $S,T \in \calS'$ that includes the same vertex player $v$, as otherwise, this would mean that $\pi(v)=\{v,S^1_v,S^2_v\}=\{v,T^1_v,T^2_v\}$, a contradiction. We conclude that $\calS'$ is an exact cover.
\end{proof}

We have not been able to identify whether the problem of computing a CR-robust outcome is polynomial-time solvable; we leave this question for future work.

\section{IR-robustness}
In Example \ref{ex:friend:nonexistence}, we have seen that an outcome of a hedonic game can fail to preserve some stability properties, even when players' preferences are symmetric friend-oriented. Our next question is the following: is it still possible to guarantee a minimum stability requirement, i.e., individual rationality, under deletion of players, while ensuring desirable property of the original partition? The answer is positive for individual stability when players have symmetric additively separable preferences. In these games, one can guarantee the existence of an individually stable partition that is IR-robust. In outline, the algorithm works as follows: starting with all-the singleton partition, we keep letting a player who has an IS-deviation deviate to his preferred coalition until there does not remain such a player. The algorithm correctly identifies an outcome that is individually stable and IR-robust due to the potential argument and the fact that each player only accepts a player for whom she has a non-negative utility.

\begin{theorem}\label{thm:IS-IRrobust}
For any symmetric additively separable game $(N,w)$ and any natural number $k>0$, there exists an individually stable and IR-robust partition. 
\end{theorem}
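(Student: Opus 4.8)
The plan is to analyze the natural best-response style dynamics sketched before the statement: begin with the partition of $N$ into singletons, and as long as some player has an IS-deviation, let one such player perform it. I would prove two things about this process—that it terminates, and that every partition it ever produces has a special structure that makes IR-robustness automatic. Individual stability of the output is then immediate, since the process stops exactly when no IS-deviation remains.

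For termination I would invoke the standard potential argument available for symmetric games. Define $\Phi(\pi)=\sum_{C\in\pi}\sum_{\{i,j\}\subseteq C} w(i,j)$, the total weight of intra-coalition pairs. When a player $i$ performs an IS-deviation from $T=\pi(i)$ to a coalition $S$, only the edges incident to $i$ change, so $\Phi$ increases by exactly $\sum_{j\in S}w(i,j)-\sum_{j\in T\setminus\{i\}}w(i,j)$; since an IS-deviation is in particular an NS-deviation, $i$ strictly prefers $S\cup\{i\}$ to $T$, which by additive separability is precisely the statement that this quantity is positive. Hence $\Phi$ strictly increases at every step, and as there are only finitely many partitions it can take only finitely many values, so the process halts. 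At the halting point no player has an IS-deviation, which is exactly individual stability.

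The key step is an invariant: every coalition occurring at any stage of the dynamics has all pairwise weights non-negative. I would prove this by induction on the number of deviations. The base case (singletons) is vacuous. For the inductive step, suppose $i$ deviates from $T$ to $S$. Removing $i$ from $T$ leaves a subset of $T$, whose pairwise weights are unchanged and hence still non-negative. For the new coalition $S\cup\{i\}$: every $j\in S$ accepts the deviation, so by definition $j$ weakly prefers $S\cup\{i\}$ to $S$, i.e. $w(j,i)\ge 0$, and symmetry gives $w(i,j)\ge 0$; combined with the inductive hypothesis that $S$ was already a non-negative-weight coalition, every pair inside $S\cup\{i\}$ has non-negative weight. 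This invariant is the crux of the argument, because it is what decouples robustness from the value of $k$.

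Finally I would derive IR-robustness from the invariant. Let $\pi$ be the output partition, fix a coalition $C\in\pi$, a player $i\in C$, and any $X\subseteq N$. Then $\sum_{j\in C\setminus X}w(i,j)$ is a sum of terms each of which is non-negative by the invariant, hence is itself non-negative, so $i$ still weakly prefers $C\setminus X$ to $\{i\}$ in the subgame. As this holds for every such $i$ and $C$ and \emph{every} deletion set $X$, the partition $\pi_{-X}$ is individually rational regardless of $|X|$; in particular it is IR-robust under deletion of at most $k$ players for any $k>0$. The main obstacle is really the invariant and its interaction with the acceptance rule—once it is in place, both individual stability (from termination) and IR-robustness (from non-negativity of subset sums) follow without any case analysis on $k$.
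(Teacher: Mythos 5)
Your proposal is correct and follows essentially the same route as the paper's own proof: the same singleton-start IS-deviation dynamics, termination via the symmetric potential function, and the same invariant that every coalition arising during the dynamics contains no pair with negative mutual weight, from which IR-robustness for arbitrary deletion sets follows. You merely spell out two steps the paper leaves implicit (the explicit potential computation, cited there to Bogomolnaia and Jackson, and the derivation that non-negative pairwise weights make every subset sum non-negative), which is fine.
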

\begin{proof}
Consider the following algorithm $\calA$: starting with all-the singleton partition $\pi$, if there is a player $i$ who has an IS-deviation to some coalition $S \in \pi$, we keep letting $i$ deviate to his preferred coalition $S$, i.e., set $\pi = (\pi \setminus \{\pi(i),S\}) \cup \{\pi(i)\setminus \{i\},S\cup \{i\} \}$, until there does not remain such a player. \citet{Bogomolnaia2002} proved that the algorithm terminates in finite steps due to the potential argument. Hence the resulting partition $\pi$ is individually stable because no player has an IS-deviation at the termination. We will show that no coalition includes a pair of players who are enemies to each other at any point of the execution of the algorithm, which implies that $\pi$ is IR-robust. First, it is clear that the claim holds for the initial partition. Suppose that until step $t$ of the execution, no coalition in $\pi^t$ includes a pair of players who are enemies to each other; we will prove that it still holds for $t+1$. Let $i$ be a player who deviates to coalition $S$ from step $t$ to step $t+1$. If there is a player $j$ in $S$ with negative utility $w(j,i)<0$, then she would not accept a deviation of $i$ to $S$, implying that all players in $S$ have non-negative utility for $i$ and vice versa. Hence players in each of the coalitions like each other. 
\end{proof}

%Given that an outcome satisfying both individual stability and IR-robustness exists for any symmetric additively separable games, it is very natural to ask what can be said in terms of computational complexity guarantees of such partitions. 
In general, finding an individually stable outcome of a symmetric additively separable game is known to be computationally intractable \citep{Gairing2011}. 
%Essentially, players can specify various weights to indicate how much they like others, which can result in an exponential number of deviations.
In contrast, we can efficiently construct an individually stable partition that is IR-robust in symmetric friend-oriented games, in which each weight only takes two values. 

\begin{theorem}\label{thm:sF:IS-IRrobust}
For any symmetric friend-oriented $(N,w)$ and any natural number $k>0$, one can compute an individually stable and IR-robust partition in polynomial time. 
\end{theorem}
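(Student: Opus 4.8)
The plan is to give an explicit polynomial-time construction of an individually stable, IR-robust partition for symmetric friend-oriented games, rather than relying on the iterative deviation procedure of Theorem~\ref{thm:IS-IRrobust}, whose running time is not obviously polynomial. The key structural observation is that in a friend-oriented game the only thing that matters for individual rationality is the friendship graph $G_w$: a coalition is individually rational if and only if every player in it has at least one friend inside the coalition (equivalently, the coalition contains no isolated vertex in the induced subgraph), since having even a single friend yields utility at least $n-(|S|-1) \geq n-(n-1) = 1 > 0$, which dominates any number of enemies. The robustness requirement, by Lemma~\ref{lem:IR}, forces every coalition to be either a clique or to have minimum degree at least $k+1$.

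First I would propose the following partition: take each connected component $C$ of $G_w$ and place it as a single coalition, \emph{unless} $C$ is a tree (or more generally fails to have enough internal connectivity to survive $k$ deletions). For components that are cliques, keeping the whole component as one coalition is trivially IR-robust. For a component that is not a clique but has minimum degree at least $k+1$, keeping it intact is also safe, because after deleting any $k$ players every surviving player retains at least one friend, so no player wants to leave (IR-robustness) and, since friends stay together and each player keeps all its intra-coalition friends up to the $k$ removed, no player has an incentive to move to another coalition (individual stability). The subtle case is a component $C$ that is neither a clique nor $(k+1)$-edge-rich; here I would decompose $C$ into overlapping-free sub-coalitions each of which is itself a clique, for instance by matching each low-degree vertex with a friend so that every resulting coalition is a clique. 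Because we use the whole connected components (or clique pieces), no player has a friend outside its own coalition, so the IS-deviation condition — which requires the deviating player to strictly gain — can never be met against another existing coalition.

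The main structural step is to argue that individual stability is preserved \emph{after} deletion: I would show that whatever partition we output has the property that within each coalition all players are pairwise friends or, failing that, each coalition has minimum degree $\ge k+1$, and across coalitions there are no friendship edges at all if we split along component boundaries. When we must split a component into clique pieces, the delicate point is that a split may create a friendship edge crossing the cut, which could give a player an IS-deviation after some of its current coalition-mates are deleted. To handle this I would route through Lemma~\ref{lem:IR} and the earlier analysis: the output must satisfy that any coalition which is merely a clique has size at most $k+1$ \emph{only if} its members have no friends outside, which is guaranteed if we never cut an edge. So the safe construction is to split only singletons or already-isolated structure and otherwise keep full components.

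The hardest part, and the step I expect to be the crux, is verifying that this component-based construction simultaneously satisfies individual stability of the \emph{original} partition (before any deletion) and IR-robustness for all $k$ simultaneously, in particular handling components that are neither cliques nor of high minimum degree. Concretely, a path on four vertices is connected but has a degree-one endpoint, so keeping it whole fails IR-robustness for $k=1$ (deleting the endpoint's unique neighbor isolates it), while cutting it into two edges may create a crossing friendship and thus an IS-deviation after deletion. I would resolve this by invoking the structural dichotomy implicit in Theorem~\ref{thm:NS:friend} and Lemma~\ref{lem:IR}: identify a decomposition into coalitions each of which is a clique containing all mutual friends, computable in $O(n+m)$ time by a depth-first traversal that greedily pairs pendant vertices with their neighbors and checks the minimum-degree condition for the remaining dense cores. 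The correctness argument then reduces to a case analysis — clique coalitions survive any deletion trivially, dense coalitions survive by the minimum-degree-$(k+1)$ guarantee — and the polynomial running time follows immediately from the linear-time graph traversal.
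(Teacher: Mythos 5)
There is a genuine gap, and it sits exactly at the step you yourself flag as the crux: components that are neither cliques nor of minimum degree at least $k+1$. For such a component you must cut friendship edges, and once an edge is cut your key claim --- that the IS-deviation condition ``can never be met against another existing coalition'' --- is simply false. The concrete fix you propose (a DFS that ``greedily pairs pendant vertices with their neighbors'') is refuted by a four-player example: let $\{a,b,c\}$ be a triangle of mutual friends and let $d$ be a pendant friend of $a$, with $k=1$. Pairing the pendant $d$ with its neighbor $a$ yields the partition $\{\{a,d\},\{b,c\}\}$, which is IR-robust but \emph{not} individually stable: player $a$ strictly prefers $\{a,b,c\}$ (utility $2n$) to $\{a,d\}$ (utility $n$), and both $b$ and $c$, being friends of $a$, accept the deviation. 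The only correct output here is $\{\{a,b,c\},\{d\}\}$, i.e., the pendant must be left as a singleton --- the opposite of your rule. More generally, a clique decomposition is individually stable only if no player $i$ has another clique $S$ in the partition consisting entirely of friends of $i$ with $|S|\ge|\pi(i)|$; nothing in your traversal enforces this global condition, and invoking Theorem~\ref{thm:NS:friend} does not help because that theorem characterizes when NS-robust outcomes \emph{exist} (they may not), whereas the partition sought here must always exist.

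The premise that led you away from the paper's route is also mistaken: the iterative IS-deviation dynamics of Theorem~\ref{thm:IS-IRrobust} \emph{is} polynomial-time for symmetric friend-oriented games, and proving this is precisely the content of the paper's proof. Since every coalition produced by the dynamics is a clique of friends, a deviation of $i$ from $\pi(i)$ to $S$ changes the number of intra-coalition friendship pairs by exactly $|S|-|\pi(i)\setminus\{i\}|\ge 1$; hence the potential $\Phi(\pi)=\sum_{T\in\pi}\bigl|\{\,\{i,j\}\subseteq T \mid w(i,j)>0\,\}\bigr|$ strictly increases at each step and is bounded by the number $m$ of friendship edges, so the dynamics terminates within $m$ steps. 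The resulting partition is individually stable by termination and IR-robust for every $k$ because all its coalitions are cliques. Your explicit-construction idea could in principle be salvaged, but it would require a correct rule for splitting sparse components (e.g., leaving pendant vertices as singletons when their neighbor belongs to a larger clique), and that rule is exactly what your proposal does not supply.
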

\begin{proof}
We denote by $w(T)$ the total number of friendship pairs in each coalition $T$, i.e., $w(T)= |\{\,\{i,j\} \mid i,j \in T \land w(i,j)>0\,\}|$.
To show that Algorithm $\calA$ terminates in polynomial time, we define the potential function $\Phi(\pi)$ to be the total number of friendship pairs within the coalitions, i.e., $\Phi(\pi):= \sum_{T \in \pi}w(T)$, for partition $\pi$. First, observe that the value of the potential function is bounded up to the total number of friendship pairs, namely, $\Phi(\pi) \le m$. Moreover, at each step of the algorithm $\Phi(\pi)$ strictly increases at least by $1$. To see this, let $i$ be a player who deviates from his own coalition to another coalition $S$ from step $t$ to step $t+1$; denote by $\pi'$ the resulting partition, i.e., $\pi'=(\pi \setminus \{\pi(i),S\}) \cup \{\pi(i)\setminus \{i\},S\cup \{i\}\}$. Recall that in the previous proof for Theorem \ref{thm:IS-IRrobust}, we have seen that each coalition of $\pi$ is a clique. Now, since $i$ has more friends in $S$ than in $\pi(i)$ and all players in $S$ accept $i$'s deviation, we have
\begin{align*}
\Phi(\pi') - \Phi(\pi) &= \sum_{T \in \pi'}w(T)-\sum_{T \in \pi}w(T), \\
&= |S| - |\pi(i) \setminus \{i\}|>0.
\end{align*}
This implies that Algorithm $\calA$ terminates in polynomial time.
\end{proof}

We note that without symmetry, the set of outcomes that are both individually stable and IR-robust can be empty. 
\begin{example}\label{ex:friend}
\upshape
Consider a friend-oriented game with four players $a,b,c,d$ where all players $a,b,c$ consider the others to be a friend, but $d$'s friend is player $c$ only. Figure \ref{fig:friend:nonsymmetric} illustrates the corresponding friendship graph. 
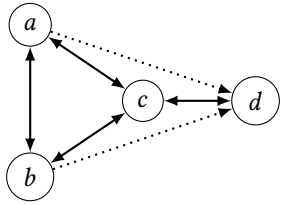
\begin{figure}[htb]
\centering
\begin{tikzpicture}[scale=1, transform shape]
	\node[draw, circle](1) at (-1,1) {$a$};
	\node[draw, circle](2) at (-1,-1) {$b$};
	\node[draw, circle](3) at (0.5,0) {$c$};
	\node[draw, circle](4) at (2,0) {$d$};	
	\draw[<->, >=latex,thick] (1)--(2);
	\draw[<->, >=latex,thick] (2)--(3);
	\draw[<->, >=latex,thick] (1)--(3);
	\draw[<->, >=latex,thick] (4)--(3);
	\draw[->, >=latex,thick,dotted] (1)--(4);
	\draw[->, >=latex,thick,dotted] (2)--(4);
\end{tikzpicture}
\caption{Non-existence of an IS and IR-robust partition for a non-symmetric friend-oriented game. The non-symmetric relations are represented by the dotted lines.
\label{fig:friend:nonsymmetric}
}
\end{figure}

Suppose towards a contradiction that there is an individually stable partition $\pi$ that is IR-robust. Then, by IR-robustness, $d$ must be in a pair with his unique friend $c$, or stay alone. In the former case, $\pi$ would not be individually stable, since if $a$ and $b$ stay alone, then $a$ would have an IS-deviation to the coalition $\{b\}$, and since if $a$ and $b$ stay together, then $c$ would have an IS-deviation to the pair $\{a,b\}$, a contradiction. Similarly, in the latter case, players $a,b,c$ must form the same coalition together by individual stability, but then player $d$ would have an IS-deviation to that coalition, a contradiction. 
\end{example}

\section{Conclusion}
We believe that this paper has made a first important step towards a future stream of research, sparked by the chemistry of two concepts, robustness and stability. Below, we list several interesting questions for future work.

Most obviously, while our main focus was on robustness against agents' non-participation, studying other types of robustness would be an important topic of research. For instance, one might want to consider sudden failure of agents' friendship relations, due to individual or political conflicts. There are also further classes subclasses of additively separable games that we have not considered in this paper, most notably fractional hedonic games \citep{aziz2014fractional,AzizBBHOP17}, which one can study from both existence and complexity aspects; in particular, it would be interesting to investigate whether a similar graph-theoretic characterization of friendship graphs that ensure the existence of stable outcomes can be obtained.
 
Further, the definition of robustness in this work only considers agent failure in a uniform and deterministic sense. However, one might want to consider specific coalitional failure, rather than all of them; for example, our model can be extended to the probabilistic setting where each agent may have different probability of not participating the game. Finally, it would be interesting to extend this line of work to other settings where stability plays an important role. Examples include stable marriage problem \citep{GaleShapley} and group activity selection problem \citep{Darmann2012,igarashi2017gasp}.

\bibliographystyle{ACM-Reference-Format}
\bibliography{abb,hedonic,resilient}
\end{document}